\theoremstyle{plain}
\newtheorem{thm}{\protect\theoremname}
\theoremstyle{plain}
\newtheorem{prop}[thm]{\protect\propositionname}
\providecommand{\theoremname}{Theorem}
\providecommand{\propositionname}{Proposition}
\begin{document}
\title{Second Law of Entanglement Manipulation with Entanglement Battery}

\author{Ray Ganardi}
\email{ray@ganardi.xyz}
\affiliation{Centre for Quantum Optical Technologies, Centre of New Technologies, University of Warsaw, Banacha 2c, 02-097 Warsaw, Poland}
\affiliation{School of Physical and Mathematical Sciences, Nanyang Technological University, 21 Nanyang Link, Singapore, 637371
}

\author{Tulja Varun Kondra}
\affiliation{Institute for Theoretical Physics III, Heinrich Heine University D\"{u}sseldorf, Universit\"{a}tsstra{\ss}e 1, D-40225 D\"{u}sseldorf, Germany}

\author{Nelly H.Y. Ng}
\affiliation{School of Physical and Mathematical Sciences, Nanyang Technological University, 21 Nanyang Link, Singapore, 637371
}
\affiliation{Centre for Quantum Technologies, National University of Singapore, 3 Science Drive 2, 117543, Singapore}

\author{Alexander Streltsov}
\affiliation{Institute of Fundamental Technological Research, Polish Academy of Sciences, Pawińskiego 5B, 02-106 Warsaw, Poland}
\affiliation{Centre for Quantum Optical Technologies, Centre of New Technologies, University of Warsaw, Banacha 2c, 02-097 Warsaw, Poland}

\begin{abstract}
A central question since the beginning of quantum information science is how two distant parties can convert one entangled state into another.
It has been conjectured that such conversions could be executed reversibly in an asymptotic regime, mirroring the reversible nature of Carnot cycles in classical thermodynamics.
While a conclusive proof of this conjecture has been missing so far, earlier studies have excluded reversible entanglement manipulation in various settings.
In this work, we show that arbitrary mixed state entanglement transformations can be made reversible under local operations and classical communication, when assisted by an entanglement battery--an auxiliary quantum system that stores and supplies entanglement in a way that ensures no net entanglement is lost.
In particular, the rate of transformation in the asymptotic limit can be quantitatively expressed as a ratio of entanglement present within the quantum states involved.
Our setting allows to consider different entanglement quantifiers which give rise to unique principles governing state transformations, effectively constituting diverse manifestations of a ``second law'' of entanglement manipulation.
These findings resolve a long-standing open question on the reversible manipulation of entangled states and are also applicable to multipartite entanglement and other quantum resource theories, including quantum thermodynamics.
\end{abstract}

\maketitle

Over the past decades, striking parallels between the principles governing manipulation of entangled systems and the laws of thermodynamics~\cite{PhysRevLett.89.240403,Brandao2008,Brandao2010,Brandao2010b,Lami2023} have been revealed. A prime illustration of this similarity is often cast with the narrative of two agents, Alice and Bob, sharing \( n \) copies of an entangled state \( \ket{\psi} \) that they can also manipulate. It is known that under certain conditions, by utilizing simple operations such as local operations and classical communication (LOCC)~\cite{RevModPhys.81.865,PhysRevA.53.2046}, Alice and Bob can transform their initially shared state into \( m \) copies of another state \( \ket{\phi} \). In the asymptotic limit of large \( m, n \), we can have $m \approx n$ if and only if the entanglement entropy $E(\psi) = S(\psi_A)$ does not increase~\cite{PhysRevA.53.2046}.
Here, $S(\rho) = -\Tr\pqty{\rho \log \rho}$ denotes the von Neumann entropy.
This rule mirrors a fundamental concept in classical thermodynamics, where the entropy of a system uniquely determines its potential for interconversion through adiabatic processes~\cite{LIEB19991}.

The striking similarity between entanglement theory and thermodynamics naturally leads to the question: Does there exist a ``second law of entanglement manipulation''~\cite{PhysRevLett.89.240403}, akin to its thermodynamic counterpart?
In other words, is there an ``entropy-like'' quantity that governs all state transformations of entangled systems?
This question is conceptually tied to the feasibility of reversibility, specifically in the sense of zero loss rate in the asymptotic limit. %), in particular when it comes to manipulating entangled states in an asymptotic setting.
Let us denote the maximal rate of transformation of $n$ copies of $\rho$ to $m$ copies of $\sigma$ as $R(\rho \to \sigma) = \sup \Bqty{m/n}$, where the supremum is taken over a set of all allowed protocols of interest~\cite[Section 11.5]{gour2024resourcesquantumworld}.
When two states $\rho, \sigma$ can be transformed reversibly, then $R(\rho \to \sigma) R(\sigma \to \rho) = 1$.
Analogously to Carnot's theorem, reversibility is connected to the efficiency of conversion between any two states.
To see this, let us define the loss rate in a cycle $\rho \to \sigma \to \rho$ as $\ell(\rho \to \sigma \to \rho) = 1 - R(\rho \to \sigma) R(\sigma \to \rho)$.
Then, a given protocol converts two states losslessly if and only if it is reversible.

It is known that for any two pure states $\psi, \phi$, we have $R(\psi \to \phi) = S(\psi_A)/S(\phi_A)$~\cite{PhysRevA.53.2046}, and in particular $R(\psi \to \phi) R(\phi \to \psi) = 1$.
In other words, the transformation is reversible.
This implies that the distillation rate of singlets determines \emph{all} transformation rates $R(\psi \to \phi) = R(\psi \to \psi^-)/R(\phi \to \psi^-)$, giving us a ``second law of entanglement manipulation'' for pure states.
However, the situation no longer holds for mixed states.
Extreme examples are the well-studied bound entangled states~\cite{PhysRevLett.80.5239}, where entanglement is required for their formation (under LOCC), yet one cannot extract any usable entanglement from these states in the form of singlets~\cite{PhysRevLett.80.5239,vidal2001irreversibility}.

To obtain a second law for entanglement beyond pure states, one has to move beyond the traditional LOCC paradigm. Over the past decade, a prominent approach was to connect the problem of state transformations to that of hypothesis testing~\cite{Brandao2008}.
In this framework, all protocols that inject a small amount of entanglement into the system were allowed, provided that this contribution vanishes in the asymptotic limit.
Nonetheless, it is important to note that the assertions in~\cite{Brandao2008} rely on a landmark result known as the generalized quantum Stein's lemma~\cite{Brandao2010}. Recent scrutiny~\cite{Berta2023gapinproofof} has raised questions about the validity of this lemma, casting uncertainty on the robustness of the conclusions drawn from its application.

Ref.~\cite{Lami2023} motivated recent renewed interest in the problem of (ir)reversibility. It was shown that when restricting to operations that preserve the set of separable states—known as non-entangling operations—then entanglement manipulation is necessarily irreversible.
Even when we boost LOCC with catalysis, reversibility cannot be obtained due to the existence of bound entanglement in this setting~\cite{lami2023catalysis,ganardi2023catalytic}.
One can in principle recover reversibility by allowing the use of non-CPTP maps~\cite{chen2025reversible}.
However, these maps are not physical and any implementation must involve classical postprocessing that comes with an associated sampling overhead.
A different approach shows reversibility by adopting probabilistic protocols~\cite{regula2024reversibility}, which necessarily changes the meaning of reversibility.
Indeed, in this we are only guaranteed that there is a non-zero chance that we recover the initial state.
Are these sacrifices necessary to obtain a reversible theory of entanglement?

In this article, we contribute to the above ongoing debate, by demonstrating the existence of a physically meaningful framework in which entanglement reversibility can be achieved. We introduce the concept of entanglement batteries~\cite{alhambra2019entanglement} into our framework. An entanglement battery is an additional entangled system shared between Alice and Bob. In order to prevent the embezzling of resources~\cite{PhysRevA.67.060302}, the battery must be returned at the end of the procedure with at least the same amount of entanglement. This idea can be viewed as a generalization of entanglement catalysis~\cite{PhysRevLett.83.3566}, a topic that has garnered significant interest recently~\cite{Datta_2023,lipka2023catalysis}. An entangled catalyst, in this context, is an ancillary quantum system in an entangled state provided to Alice and Bob. They are allowed to employ this catalyst in their transformation process, with the requirement that it must be returned in its original state. Our approach extends this setup by allowing changes in the state of the ancillary system, provided that there is no reduction in its entanglement.

\begin{figure*}
    \centering
    \includegraphics[width=0.7\paperwidth]{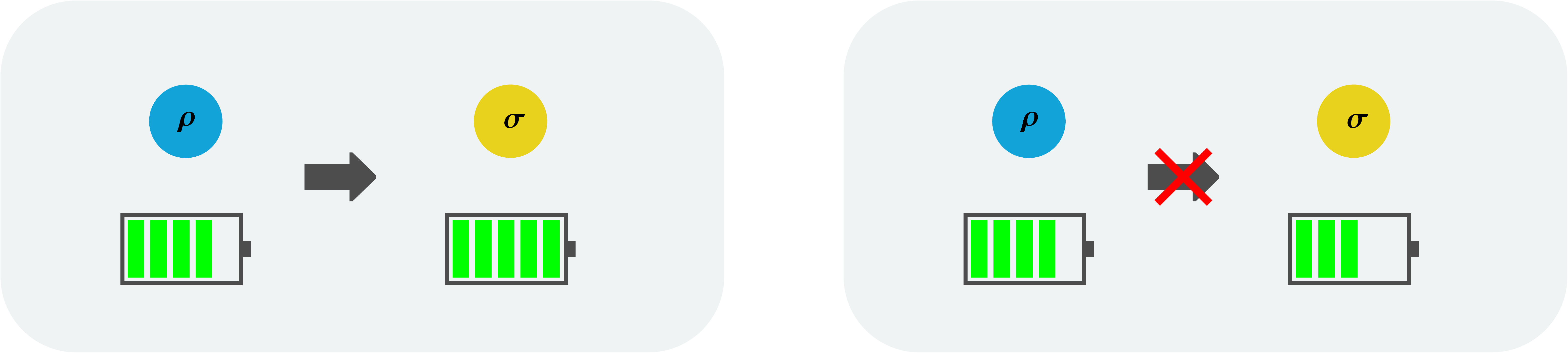}
    \caption{\textbf{State transformations with an entanglement battery.} A conversion from $\rho$ to $\sigma$ is possible if $\rho \otimes \tau$ can be converted into $\sigma \otimes \tilde \tau$ via LOCC, and the amount of entanglement in the battery does not decrease (left part of the figure). Transformations which decrease the entanglement in the battery are not allowed (right part of the figure).}
    \label{fig:LOCCEB}
\end{figure*}

\textbf{\emph{Reversible entanglement manipulations with entanglement battery.}}
We demonstrate that reversible manipulations across all entangled states become feasible when an entanglement battery is incorporated.
Specifically, we consider a setting where Alice and Bob can perform LOCC, and additionally are given access to a supplementary shared entangled state, which is our \emph{entanglement battery}.
This setting has been previously studied in Ref.~\cite{alhambra2019entanglement}, where it was shown that we can formulate a fluctuation theorem in pure state entanglement transformations with an explicit battery model. In this article, we generalize this framework to allow for a general battery and we study transformations between general mixed states.
We show that we obtain a reversible theory when we impose the following restriction on the battery: Alice and Bob must not decrease the level of entanglement within the battery throughout the process. The setup is shown in Fig.~\ref{fig:LOCCEB}. 

More specifically, we examine transformations of the form 
\begin{equation}
    \rho \otimes \tau \rightarrow \sigma \otimes \tilde{\tau},
\end{equation}
where \( \tau \) represents the initial state of the battery, and \( \tilde{\tau} \) denotes its final state. The requirement that the battery does not lose any entanglement then amounts to
\begin{equation}
    E(\tilde{\tau}) \geq E(\tau), \label{eq:Battery}
\end{equation}
where the choice of the entanglement quantifier \(E\) plays a pivotal role in our analysis. We show that varying the choice of entanglement quantifiers leads to the emergence of distinct ``second laws'' of entanglement manipulation. 

The basic property of any entanglement measure $E$ is that it should not increase under LOCC, i.e., $E(\Lambda[\rho])\leq E(\rho)$ for any LOCC protocol $\Lambda$~\cite{PhysRevLett.78.2275}. A particularly interesting class of entanglement measures for our analysis are quantifiers which are additive, i.e., they satisfy $E(\rho \otimes \tau) = E(\rho) + E(\tau)$ for any states $\rho$ and $\tau$. Additionally, we will consider entanglement quantifiers that exhibit asymptotic continuity~\cite{Synak-Radtke_2006}. This property implies that for two states close to each other in trace distance, the difference in the amount of entanglement can grow at most logarithmically in the Hilbert space dimension.
We further require the entanglement measure to be finite for all states. The subsequent theorem offers a comprehensive characterization of state transformations within this setup.
\begin{thm} \label{thm:1}
A state $\rho$ can be converted into another state $\sigma$ via LOCC with an entanglement battery if and only if 
\begin{equation}
E(\rho) \geq E(\sigma). \label{eq:Main}
\end{equation}
Here, $E$ is an additive and finite entanglement measure.
\end{thm}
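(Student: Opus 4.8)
The plan is to prove the two directions separately. The ``only if'' direction is a short bookkeeping argument that uses exactly the three hypotheses on $E$: monotonicity under LOCC (which holds because $E$ is an entanglement measure), additivity, and finiteness. The ``if'' direction is settled by an explicit and essentially trivial protocol in which the desired output is \emph{pre-loaded} into the battery and then exchanged with the system by a purely local operation.

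For necessity, suppose $\rho$ is convertible to $\sigma$, so that some LOCC channel $\Lambda$ and battery states $\tau,\tilde\tau$ obey $\Lambda(\rho\otimes\tau)=\sigma\otimes\tilde\tau$ together with the battery constraint $E(\tilde\tau)\geq E(\tau)$. Monotonicity gives $E(\rho\otimes\tau)\geq E(\sigma\otimes\tilde\tau)$; additivity turns both sides into sums, $E(\rho)+E(\tau)\geq E(\sigma)+E(\tilde\tau)$; the battery constraint then gives $E(\rho)+E(\tau)\geq E(\sigma)+E(\tau)$; and since $E(\tau)$ is finite it can be cancelled, leaving $E(\rho)\geq E(\sigma)$. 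If ``convertible'' is meant in an approximate or asymptotic sense — $\Lambda_n(\rho^{\otimes n}\otimes\tau_n)$ within trace distance $\varepsilon_n\to0$ of $\sigma^{\otimes n}\otimes\tilde\tau_n$ — I would instead run this chain for $\rho^{\otimes n}$, use asymptotic continuity to absorb the $\varepsilon_n$-error into a correction of order $f(\varepsilon_n)\log(\dim)$, divide by $n$, and take $n\to\infty$.

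For sufficiency, assume $E(\rho)\geq E(\sigma)$ and prepare the battery in the state $\tau=\sigma$. The protocol is to swap the contents of the system register and the battery register. Because each register is itself a bipartite system shared by Alice and Bob, this swap factorises as a local unitary on Alice's side tensored with a local unitary on Bob's side, and is therefore an LOCC operation. After the swap the system is in $\sigma$ and the battery is left in $\tilde\tau=\rho$, so the battery condition becomes precisely $E(\rho)\geq E(\sigma)$, which is the hypothesis. (In the many-copy formulation one prepares the battery in $\sigma^{\otimes n}$, swaps, and uses additivity to reduce the condition $E(\rho^{\otimes n})\geq E(\sigma^{\otimes n})$ back to $E(\rho)\geq E(\sigma)$.)

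I expect the only place requiring real care to be the error control in the approximate/asymptotic version of necessity: the asymptotic-continuity correction grows like the logarithm of the total dimension, so to conclude that $\tfrac{1}{n}f(\varepsilon_n)\log(\dim)\to0$ — and hence that $E(\rho)\geq E(\sigma)$ is tight — one must restrict to batteries whose dimension grows at most exponentially in $n$ (the physically natural regime) or otherwise argue that larger batteries do not help. For the exact single-shot statement no such issue arises. The real message is structural: allowing an entanglement battery lets the swap protocol bypass distillation and dilution entirely, so that all conversions are governed by whichever additive, finite measure $E$ one fixes, and the ``second law'' is then one law per choice of $E$.
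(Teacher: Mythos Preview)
Your proof is correct and matches the paper's argument essentially line for line: the ``only if'' direction is the same additivity-plus-monotonicity bookkeeping with finiteness used to cancel $E(\tau)$, and the ``if'' direction is exactly the swap protocol with the battery initialised in $\sigma$. Your extra discussion of the approximate/asymptotic version and battery dimensions is superfluous here, since Theorem~\ref{thm:1} is the exact single-shot statement (the asymptotic rate with vanishing error is treated separately in Theorem~\ref{thm:2}); but you already note this yourself, and it does no harm.
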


We provide a concise outline of our proof, with comprehensive details available in the End Matters. First, we leverage the properties of additive entanglement measures to establish that under the framework of LOCC augmented with an entanglement battery, the amount of entanglement does not increase. Subsequently, we introduce a protocol that transforms the state \( \rho \) into \( \sigma \) for any pair of states that comply with the stipulations outlined in Eq.~(\ref{eq:Main}). This protocol is remarkably straightforward: the battery is initialized in the target state \( \sigma \), and then local transformations are employed to interchange the states of the main system and the battery. A similar technique has been used recently in \cite{lipka2023catalysis} in the context of correlated catalysis, see also~\cite{alhambra2019entanglement}.

Let us now consider the multi-copy setting, where Alice and Bob are provided with \( n \) copies of the initial quantum state \( \rho \). The aim is to produce \( m \) copies of the target state \( \sigma \). We are interested in the maximal rate of the transformation $m/n$, allowing for an error which vanishes in the asymptotic limit. The following theorem gives a complete characterization of the transformation rates if Alice and Bob perform an LOCC protocol with an entanglement battery.

\begin{thm} \label{thm:2}
    The maximal conversion rate for converting $\rho$ into $\sigma$ via LOCC with an entanglement battery is given by
    \begin{equation}
R(\rho\rightarrow\sigma)=\frac{E(\rho)}{E(\sigma)}.
\end{equation}
Here, $E$ is an entanglement measure which is finite, additive and asymptotically continuous.
\end{thm}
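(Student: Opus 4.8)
\emph{Proof strategy.} I would establish the two bounds $R(\rho\to\sigma)\ge E(\rho)/E(\sigma)$ and $R(\rho\to\sigma)\le E(\rho)/E(\sigma)$ separately, assuming throughout that $\sigma$ is entangled so that $E(\sigma)>0$; the boundary case $E(\sigma)=0$ gives $R=\infty$, consistent with the formula, and can be dispatched on its own. The achievability direction will rely only on Theorem~\ref{thm:1} together with additivity and needs no asymptotic analysis at all, while asymptotic continuity enters only in the converse.

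For achievability, the key observation is that $n$ copies of $\rho$ and $m$ copies of $\sigma$ are themselves states, so Theorem~\ref{thm:1} applies verbatim to the pair $\rho^{\otimes n}$, $\sigma^{\otimes m}$: the conversion $\rho^{\otimes n}\to\sigma^{\otimes m}$ is possible via LOCC with an entanglement battery if and only if $E(\rho^{\otimes n})\ge E(\sigma^{\otimes m})$. By additivity this reads $nE(\rho)\ge mE(\sigma)$, i.e.\ $m/n\le E(\rho)/E(\sigma)$. Hence every rate $m/n\le E(\rho)/E(\sigma)$ is attained \emph{exactly}, with zero error, and therefore $R(\rho\to\sigma)\ge E(\rho)/E(\sigma)$.

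For the converse, consider any sequence of protocols taking $\rho^{\otimes n}\otimes\tau_n$ by LOCC to a state $\sigma_n\otimes\tilde\tau_n$ with $E(\tilde\tau_n)\ge E(\tau_n)$, where $\|\sigma_n-\sigma^{\otimes m_n}\|_1=\varepsilon_n\to 0$ and $m_n/n\to R(\rho\to\sigma)$. The monotonicity of $E$ under LOCC with an entanglement battery—established in the first part of the proof of Theorem~\ref{thm:1}—together with additivity and the finiteness of $E$ (used to cancel $E(\tau_n)$ on both sides) gives $nE(\rho)\ge E(\sigma_n)$. Asymptotic continuity then yields $E(\sigma_n)\ge m_n E(\sigma)-\varepsilon_n m_n\log d_\sigma-g(\varepsilon_n)$, where $d_\sigma$ is the dimension supporting one copy of $\sigma$ and $g(\varepsilon)\to 0$ as $\varepsilon\to 0$. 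Dividing by $n$ and combining,
\begin{equation}
E(\rho)\ge \frac{m_n}{n}\bigl(E(\sigma)-\varepsilon_n\log d_\sigma\bigr)-\frac{g(\varepsilon_n)}{n}.
\end{equation}
For $n$ large enough that $\varepsilon_n\log d_\sigma<E(\sigma)/2$ this already forces $m_n/n$ to be bounded, and letting $n\to\infty$ gives $E(\rho)\ge R(\rho\to\sigma)\,E(\sigma)$, i.e.\ $R(\rho\to\sigma)\le E(\rho)/E(\sigma)$.

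I expect the converse to be the only genuine obstacle: one must ensure that the asymptotic-continuity correction does not spoil the bound even though its prefactor $\log d_\sigma^{m_n}=m_n\log d_\sigma$ grows linearly with the number of output copies. This is controlled precisely because the error $\varepsilon_n$ decays while the rate $m_n/n$ stays bounded, so the offending product $\tfrac{m_n}{n}\varepsilon_n\log d_\sigma$ vanishes—exactly the reason ordinary continuity would be insufficient and asymptotic continuity must be assumed. Everything else (existence of the limiting sequence, the $E(\sigma)=0$ and $E(\rho)=0$ corner cases) is routine bookkeeping.
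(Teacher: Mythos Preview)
Your proposal is correct and follows essentially the same route as the paper: monotonicity plus additivity for the upper bound, with asymptotic continuity absorbing the approximation error that scales like $m\log d$, and the swap-based protocol for achievability. The only cosmetic difference is that you invoke Theorem~\ref{thm:1} as a black box on the pair $(\rho^{\otimes n},\sigma^{\otimes m})$ for achievability, whereas the paper spells out the swap protocol again and treats the rational and irrational cases of $E(\rho)/E(\sigma)$ separately; your packaging is slightly tidier but the content is the same.
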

\noindent An outline of the proof is provided here, with detailed elaboration available in the End Matters. Employing the properties of additive and asymptotically continuous entanglement measures, we first demonstrate that the transformation rate is upper-bounded by $E(\rho)/E(\sigma)$. The converse is shown by introducing a protocol that accomplishes the transformation at the aforementioned rate. Echoing the approach of Theorem~\ref{thm:1}, the optimal protocol involves a battery in an entangled state, specifically comprising copies of the target state \( \sigma \). The transformation \( \rho \rightarrow \sigma \) at the rate of $E(\rho)/E(\sigma)$ can be realized by permuting the main system with the battery. 

An immediate consequence of our work is the construction of a framework of reversible entanglement manipulation. More specifically, we establish that $R(\rho \rightarrow \sigma) R(\sigma \rightarrow \rho) = 1$ for any $\rho$ and $\sigma$, implying that in the asymptotic limit, any two entangled states \( \rho \) and \( \sigma \) can be interconverted reversibly. Furthermore, when the ratio $E(\rho)/E(\sigma)$ is rational, reversible interconversion is feasible even with a finite number of copies. We further notice that the optimal conversion rate can be achieved through a protocol involving local operations alone, which interestingly obviates the need for classical communication in this task. We suspect that this is an artifact of allowing the battery to contain any finite amount of entanglement. However, we anticipate that achieving the conversion with minimal entanglement in the battery may require some level of classical communication.
In addition, we can argue that this framework is tight with respect to reversibility;
in the Supplemental Material we show that if we require that the entanglement in the battery does not decrease when measured by two non-equivalent measures, then we cannot obtain a reversible framework.

An example of an entanglement quantifier that satisfies our criteria, namely being additive and asymptotically continuous, is the squashed entanglement, defined as~\cite{10.1063/1.1643788}
\begin{equation}
    E(\rho^{AB}) = \text{inf}\Bqty{\frac{1}{2}I(A;B|E): \rho^{ABE}\,\,\text{extension of}\,\, \rho^{AB} }, \label{eq:SquashedEntanglement}
\end{equation}
with the quantum conditional mutual information $I(A; B|E)$. We refer to the Supplemental Material for more details about the properties of squashed entanglement. We note that the squashed entanglement is not the only entanglement quantifier having these properties, another example is given in the Supplemental Material. 

Our findings not only affirm the existence of a second law for entangled state transformations, but also link it to the squashed entanglement present in the involved quantum states. Prior to our work, it was hypothesized~\cite{Brandao2008,PhysRevLett.89.240403} that should reversible manipulation of entangled states be achievable in any framework, the rates at which such reversible transformations occur would be linked to a different measure of entanglement: the regularized relative entropy of entanglement. This quantity is defined as~\cite{PhysRevLett.78.2275} \begin{equation} \label{eq:Eregularized}
 E_{\mathrm{r}}^{\infty}(\rho)=\lim_{n\rightarrow\infty}\frac{1}{n}E_{\mathrm{r}}(\rho^{\otimes n})
\end{equation}
with the relative entropy of entanglement defined as $E_{\mathrm r}(\rho)=\min_{\sigma\in\mathcal{S}}S(\rho||\sigma)$, and the quantum relative entropy $S(\rho||\sigma)=\mathrm{Tr}(\rho\log_{2}\rho)-\mathrm{Tr}(\rho\log_{2}\sigma)$. Here, the minimization is done over the set of separable (i.e. non-entangled) states $\mathcal S$.

 As detailed in the Supplementary Materials, we can extend our methodologies to establish a version of the second law of entanglement manipulations predicated on the relative entropy of entanglement.
Additionally, we illustrate a phenomenon termed \emph{self-dilution}: the asymptotic conversion of $n$ copies of an entangled state into $m > n$ copies of itself, with an error that can be made arbitrarily small in the asymptotic limit. This is possible when we use an additive but not asymptotically continuous entanglement measure such as logarithmic negativity. More details with concrete examples are discussed in the Supplemental Material. 

Furthermore, the findings discussed thus far are readily adaptable to multipartite scenarios. In situations involving $N$ parties, the objective becomes transforming an $N$-partite state $\rho$ into another $N$-partite state $\sigma$, utilizing an ancillary system that may also exhibit entanglement across all $N$ parties. In such contexts, Theorems~\ref{thm:1} and \ref{thm:2} remain applicable when the entanglement measure employed is the multipartite squashed entanglement~\cite{5075874}, since the latter is additive and asymptotically continuous. This implies that in this framework, reversible transitions are feasible between any multipartite entangled states.

We will now discuss the difference of our methods from the approach considered in Ref.~\cite{Lami2023}. Specifically, Ref.~\cite{Lami2023} posits that reversible manipulations of entangled states are unattainable in frameworks reliant on non-entangling operations. In such frameworks, Alice and Bob, possessing many copies of a state \( \rho \), are restricted to quantum operations that cannot generate entanglement from separable states. Under these conditions, the authors of~\cite{Lami2023} demonstrate the impossibility of reversible interconversion between certain entangled states \( \rho \) and \( \sigma \), i.e., \( R(\rho \rightarrow \sigma)  R(\sigma \rightarrow \rho) < 1 \). It is important to note that the methods in Ref.~\cite{Lami2023} do not cover the transformations considered in this work, because we allow the agents to have access to an additional system.
This opens a degree of freedom since Alice and Bob can further optimize the protocol (i.e. operation \emph{and} battery) in a way that depends on the system's initial and final states.
This is critical in enabling reversible interconversions within our framework, and moreover is a standard feature of catalytic processes~\cite{PhysRevLett.83.3566,Datta_2023,lipka2023catalysis}. Additionally, it is crucial to clarify that in our setup, Alice and Bob are also precluded from creating entanglement from separable states if the entanglement measure $E$ is additive and vanishes solely on separable states. These conditions are satisfied, for example, by the squashed entanglement~\cite{Li2014,10.1063/1.1643788,Alicki_2004}. In summary, we are focusing on non-entangling state transformations instead of non-entangling operations.

In the preceding discussion, we operated under the assumption that the main system and the battery return to an uncorrelated state at the procedure's conclusion. However, this requirement can be relaxed to allow for correlations between them, with the constraint that the amount of entanglement in the battery does not decrease. In this scenario, any entanglement measure $E$ yields a reversible theory provided it satisfies four conditions: monotonicity under LOCC, continuity, additivity on product states, and general superadditivity, expressed as $E(\rho^{AA'BB'}) \geq E(\rho^{AB})+E(\rho^{A'B'})$. Superadditivity has been shown to be a critical feature of resource monotones in general~\cite{PhysRevLett.126.150502,takagi2022correlation,lipka2023catalysis}. 
Notably, for pure bipartite states, transformations in this framework are entirely governed by the entanglement entropy, regardless of the entanglement measure chosen.
This mirrors the uniqueness of entanglement entropy as the measure governing the transformations under standard LOCC~\cite{donald2002uniqueness,PhysRevA.53.2046,PhysRevLett.127.150503}.
For an in-depth discussion, we refer to the Supplemental Material.

Not all entanglement measures prove useful within this context. Specifically, for certain measures such as geometric entanglement, the constraint in Eq.~(\ref{eq:Battery}) fails to impose limitations on potential transformations, permitting arbitrary amplification of entanglement within the main system. This phenomenon bears resemblance to the concept of entanglement embezzlement, previously discussed in~\cite{PhysRevA.67.060302}.
We refer the interested reader to the Supplemental Material for more details.

\textbf{\emph{Quantum thermodynamics.}}
The framework described in this work has immediate implications beyond entanglement theory.
The second law of thermodynamics says that state transformations in classical thermodynamics are governed by the free energy~\cite{LIEB19991}.
However, this statement relies on several assumptions such as the negligibility of fluctuations and energetic coherence, which might not necessarily hold in the quantum regime. In particular, one commonly refers to quantum states without any coherence in the energy eigenbasis as energy-incoherent states. 

A commonly used model to study the thermodynamics of quantum systems is the framework of thermal operations, which explicitly models interactions with a thermal bath through an energy-preserving unitary~\cite{janzing2000thermodynamic}.
For energy-incoherent states, the transformations are fully governed by a family of inequalities called the thermomajorization conditions~\cite{horodecki2013fundamental}. The original form of the second law is subsequently recovered, if we consider catalytic transformations that allows correlations between the system and catalyst~\cite{PhysRevX.8.041051}.
More precisely, it was shown that for any two energy-incoherent states $\rho, \sigma$, there exists a thermal operation that transforms $\rho^S$ into $\sigma^{S'}$ if and only if $F(\rho^S, H^S) \geq F(\sigma^{S'}, H^{S'})$, where $F$ is the free energy, defined as
\begin{align}
    F(\rho, H) = k_B T \pqty{S(\rho \| \gamma) - \log Z}.
\end{align}
Here, $T$ is the temperature $Z$ is the partition function, and $\gamma = \exp \pqty{-H/k_B T}/Z$ is the thermal state associated with the system (that is characterized by $H$). Numerical evidences even suggest that low-dimensional catalysts already provide significant advantages~\cite{son2022catalysis}, when it comes to the problem of simplifying the required unitary control over system and bath.

The obvious enhancement of catalysis in quantum thermodynamics even for energy-incoherent state transformations raises the following question: is it possible to extend the second law to coherent states?
Previous results suggests that coherence and catalysis can interact in unexpected ways~\cite{son2023hierarchy,tajima2024gibbspreserving,lie2023catalysis,Kondra_2024,Shiraishi_2024}, and thus it is unclear if such an extension is possible.
% In fact, it has recently been shown that the presence of catalysts in thermodynamics fully bridge between a hierarchy of thermal processes~\cite{son2023hierarchy} on the level of the generated set of quantum channels, i.e. including energy-coherent state transformations.
% Nevertheless, it is still an open question whether such catalytic power also closes the gap between thermal operations and Gibbs-preserving ones.
% The most recent progress on this problem is Ref.~\cite{tajima2024gibbspreserving}, which suggests that there are Gibbs-preserving operations that require infinite coherence to be implemented; 
% however, it is unclear whether this coherence can be used \emph{catalytically}.
% One of the challenges to this open problem lies again in the challenge of retaining external correlations (or similarly, coherence) in the catalyst after the process, as exemplified in \cite{lie2023catalysis}.
%
We answer this question by showing that the free energy determines general state transformations under thermal operations when we allow access to a thermodynamic battery.
%This framework includes catalysis as a special case, but relaxes the requirement of restoring the catalyst, and simply focuses on the reusability of the battery -- requiring that the thermodynamic resource of the battery does not decrease.
%This approach is much more flexible and avoids the problems in catalysis such as fine-tuning and embezzling.
Formally, we study transformations of the form $\rho^S \otimes \tau^B \to \sigma^{S'} \otimes \tilde{\tau}^{B'}$, along with the requirement that the free energy does not decrease $F(\tilde{\tau}^{B'}, H^{B'}) \geq F(\tau^B, H^B)$. 
Intuitively, this allows the battery to act as a source of coherence without simultaneously giving away free energy.
By analogous arguments to Theorem~\ref{thm:1}, we can show that at any finite temperature $T < \infty$, the allowed set of transformations is governed by free energy, i.e.\ $\rho^S \to \sigma^{S'}$ if and only if $F(\rho^S, H^S) \geq F(\sigma^{S'}, H^{S'})$.

\begin{thm}\label{thm:TO}
    A state $\pqty{\rho^S, H^S}$ can be converted into another state $\pqty{\sigma^{S'}, H^{S'}}$ with thermal operations and a free energy battery if and only if $F(\rho^S, H^S)\geq F(\sigma^{S'}, H^{S'})$.
\end{thm}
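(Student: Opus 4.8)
\emph{Plan.} My plan is to replay the proof of Theorem~\ref{thm:1} almost verbatim, with the nonequilibrium free energy
\begin{equation}
F(\rho,H)=\Tr(\rho H)-k_{B}T\,S(\rho)=k_{B}T\pqty{S(\rho\|\gamma)-\log Z}
\end{equation}
in the role of the additive measure $E$, and ``thermal operations $+$ free energy battery'' in the role of ``LOCC $+$ entanglement battery.'' Three structural properties of $F$ carry the argument: (i) \emph{additivity on products}, $F(\rho^{S}\otimes\tau^{B})=F(\rho^{S},H^{S})+F(\tau^{B},H^{B})$ for the composite Hamiltonian $H^{S}\otimes I+I\otimes H^{B}$, which is immediate since $\Tr(\rho H)$ and $S(\rho)$ are additive; (ii) \emph{monotonicity under thermal operations}, including those that change the Hamiltonian---the usual second law for nonequilibrium free energy, which follows because thermal operations are Gibbs-preserving and $S(\cdot\|\gamma)$ is monotone under them, a Hamiltonian change being absorbed by the standard switch/clock construction; and (iii) \emph{finiteness}: $F(\tau,H)$ is finite for a finite-dimensional system at any $0<T<\infty$, so it may be cancelled from both sides of an inequality (this fails as $T\to\infty$, which is why the statement is restricted to $T<\infty$).

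\emph{Necessity.} Suppose the conversion is realized by a thermal operation $\Lambda$ on system-plus-battery with $\Lambda(\rho^{S}\otimes\tau^{B})=\sigma^{S'}\otimes\tilde{\tau}^{B'}$ and $F(\tilde{\tau}^{B'},H^{B'})\geq F(\tau^{B},H^{B})$. Then (ii) and (i) give
\begin{equation}
F(\rho^{S},H^{S})+F(\tau^{B},H^{B})=F(\rho^{S}\otimes\tau^{B})\geq F(\sigma^{S'}\otimes\tilde{\tau}^{B'})=F(\sigma^{S'},H^{S'})+F(\tilde{\tau}^{B'},H^{B'}),
\end{equation}
and bounding the last term below by $F(\tau^{B},H^{B})$ and cancelling this finite quantity yields $F(\rho^{S},H^{S})\geq F(\sigma^{S'},H^{S'})$. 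If the system and battery are allowed to stay correlated, I would replace the final equality by $F(\omega^{S'B'})\geq F(\sigma^{S'},H^{S'})+F(\tilde{\tau}^{B'},H^{B'})$, which holds because $\Tr[\omega^{S'B'}(H^{S'}\otimes I+I\otimes H^{B'})]$ depends only on the marginals while $S(\omega^{S'B'})\leq S(\sigma^{S'})+S(\tilde{\tau}^{B'})$ by subadditivity of the von Neumann entropy---i.e.\ $F$ is superadditive---so the chain goes through unchanged.

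\emph{Sufficiency.} Assume $F(\rho^{S},H^{S})\geq F(\sigma^{S'},H^{S'})$. Take the battery to be a copy of the target: let $B$ carry Hamiltonian $H^{B}:=H^{S'}$ and be prepared in $\tau^{B}:=\sigma^{S'}$. The protocol is purely a relabeling of wires---apply the identity channel (trivially an energy-conserving global unitary, hence a thermal operation), then declare the $B$-wire to be the output system $S'$ and the $S$-wire to be the output battery $B'$ with $H^{B'}:=H^{S}$. The output on $S'$ is $\sigma^{S'}$ as desired and the final battery state is $\tilde{\tau}^{B'}=\rho^{S}$, so the battery constraint reads $F(\tilde{\tau}^{B'},H^{B'})=F(\rho^{S},H^{S})\geq F(\sigma^{S'},H^{S'})=F(\tau^{B},H^{B})$, which is exactly the hypothesis. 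Note that one should \emph{not} physically swap the two systems, since SWAP fails to conserve energy when $H^{S}\neq H^{S'}$; it is the reassignment of output ports that does the job.

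\emph{Main obstacle.} As with Theorem~\ref{thm:1}, the protocol side is essentially free once the relabeling trick is in hand, so the only substantive point is property (ii): one must formalize thermal operations with time-dependent Hamiltonians carefully enough that $F(\Lambda[\cdot])\leq F(\cdot)$ still holds when $H^{S}\neq H^{S'}$ (and for the battery). This is standard---it reduces to Gibbs-preservation plus monotonicity of the relative entropy once the Hamiltonian switch is modelled explicitly---but it is the step requiring genuine care, along with the bookkeeping ensuring every free energy stays finite so that the battery term may be cancelled.
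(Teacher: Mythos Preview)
Your proposal is correct and follows essentially the same approach as the paper: additivity, monotonicity, and finiteness of $F$ give the necessity direction exactly as in Theorem~\ref{thm:1}, and the sufficiency protocol is the swap/relabel trick with the battery initialized in the target state. Your ``relabeling of wires'' description is just an alternative phrasing of what the paper calls ``swapping the system and battery along with their Hamiltonians''---both amount to the same allowed operation, and your cautionary remark that a naive SWAP of states alone (with Hamiltonians held fixed) would not be energy-conserving is a helpful clarification the paper leaves implicit.
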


We emphasize that Theorem~\ref{thm:TO} holds for generic states, as opposed to most results in single-shot thermodynamics that hold only for energy-incoherent states.
%Furthermore, we can even allow the system and battery to get correlated, similar to the entanglement case.
%Because free energy is a superadditive measure, it will still govern the allowed transformations.
%The proof of Theorem~\ref{thm:TO} relies on the key insight that thermal operations allow for the swapping of system and battery states. To reiterate, while swapping only the state between two systems is only allowed if they have the same Hamiltonian, swapping both the state and the Hamiltonian is always allowed since it amounts to a relabelling of the systems.
%
It is worth noting that the state transformation conditions in Theorem~\ref{thm:TO} are identical to that of Gibbs-preserving operations with correlated catalysis~\cite{PhysRevLett.126.150502}.
Thus, our theorem provides an operational interpretation of catalytic Gibbs-preserving operations as thermal operations augmented with a free energy battery.
%In our setting, the battery can act as a reservoir of coherence that can be freely used.
%At the same time, we can prevent pumping free energy into the system by constraining the free energy of the battery.
%It would be interesting to investigate whether we can augment thermal operations in a similar way to reproduce the non-catalytic version of Gibbs-preserving operations.

%Note that as we take the limit of the temperature $T \to \infty$, the free energy of any state diverges except for the Gibbs state.
%The condition in Theorem~\ref{thm:TO} suggests that in this limit, the set of states can be divided into two classes: the Gibbs state and everything else.
%However, this is not the case; if we allow the battery to have infinite free energy, then we can subtract any finite amount of free energy from the battery and satisfy the resource non-decreasing condition on the battery.
%This would allow us to create any state, even starting from the Gibbs state.
%Thus, in order to obtain meaningful conditions for state manipulations, the resource measure of the battery should be finite.

The choice of free energy $F$ in Theorem~\ref{thm:TO} is not unique: in the quantum regime, there exists a family of generalized free energies $F_{\alpha}$~\cite{brandao2015second} that determines the allowed transformations with exact catalysis.
All of these generalized free energies are additive, and therefore using them in our framework will lead to transformations that are governed by a single monotone.
We can go further and relate the resource change in the battery to that in the system:
\begin{align}\label{eq:resource-balance}
    f(\rho^S, H^S) - f(\sigma^{S'}, H^{S'}) \geq f(\tilde{\tau}^{B'}, H^{B'}) - f(\tau^B, H^B),
\end{align}
which hold for any additive monotone $f$.
However, when we allow correlations, then Eq.~(\ref{eq:resource-balance}) is equivalent to the local monotonicity property studied in Ref.~\cite{Boes_2022}.
There, it was shown that the standard free energy $F$ is essentially the only function that is locally monotonic and continuous, up to additive and multiplicative constants.
Combined with our results, this gives a formulation of the unique thermodynamic second law for coherent quantum systems.

\textbf{\emph{Conclusions.}}
In this article, we have explored the concept of entanglement batteries and their impact on the manipulation of entangled systems. Our findings illuminate the path toward achieving reversible entanglement manipulations across all quantum states, thereby addressing a long-standing challenge in quantum information science. 

Our results lead to a family of ``second laws'' for entanglement manipulation, each characterized by the specific measure used to quantify entanglement of the system. Notably, we demonstrate that, for certain entanglement quantifiers, the asymptotic conversion rates for any two states $\rho$ and $\sigma$ take a particularly simple form $E(\rho)/E(\sigma)$.

This work opens several avenues for future research, presenting questions pivotal for a deeper understanding of entangled systems. A particularly compelling area for further investigation involves identifying all entanglement quantifiers that yield conversion rates in the form of $E(\rho)/E(\sigma)$. It was previously speculated~\cite{Brandao2008} that the regularized relative entropy of entanglement might uniquely possess this characteristic, but our findings hint at the possibility that other entanglement quantifiers may also be suitable for this task.

Our techniques are not limited to the domain of entanglement but are applicable to a broad spectrum of quantum resources~\cite{RevModPhys.91.025001}. This can be achieved through generalizing the concept of entanglement battery to a \emph{resource battery} -- a supplementary system that participates in the transformation process without a decrease of the resource in question. Although the principle of reversibility has been confirmed in various quantum resource theories by other methods, our framework stands out as a comprehensive model. It can systematically enable the demonstration of reversibility across quantum resource theories based on a minimal set of assumptions. 

\emph{Note added.} After the completion of our work, two proofs of the generalized quantum Stein's lemma were posted on arXiv~\cite{hayashi2024generalizedquantumsteinslemma,lami2024solutiongeneralisedquantumsteins}, establishing the reversibility of entanglement theory under asymptotically non-entangling operations.
These results are complementary to our work, and show that reversibility of entanglement can be obtained in multiple frameworks.

This work was supported by the National Science Centre Poland (Grant No. 2022/46/E/ST2/00115) and within the QuantERA II Programme (Grant No. 2021/03/Y/ST2/00178, acronym ExTRaQT) that has received funding from the European Union's Horizon 2020 research and innovation programme under Grant Agreement No. 101017733. The work of TVK is supported by the German Federal Ministry of Education and Research (BMBF) within the funding program “quantum technologies – from basic research to market” in the joint project QSolid (grant number 13N16163). RG and NHYN are supported by the start-up grant for Nanyang Assistant Professorship of Nanyang Technological University, Singapore.

% \bibliography{literature}

\section*{End Matters}

\subsection{LOCC with entanglement battery}
Let us start with a formal definition of the procedure considered in our article. We say that $\rho^{AB}$ can be converted into $\sigma^{AB}$ via LOCC with entanglement battery if there exists an LOCC protocol $\Lambda$ and states $\tau^{A'B'}$ and $\tilde\tau^{A'B'}$ such that
\begin{equation}
\Lambda(\rho^{AB}\otimes\tau^{A'B'})=\sigma^{AB}\otimes\tilde{\tau}^{A'B'}. \label{eq:LOCCEB-1}
\end{equation}
For more details about LOCC protocols and their features we refer to Ref.~\cite{Chitambar2014}. In the following, we denote $AB$ as the \emph{main system}, and $A'B'$ comprises the entanglement battery.
Moreover, we require that the final state of the battery $\tilde \tau$ has at least the same amount of entanglement as the initial state $\tau$, i.e., 
\begin{equation}
E(\tilde{\tau}^{A'B'}) \geq E(\tau^{A'B'}). \label{eq:LOCCEB-2}
\end{equation}
If $E$ is continuous, then without loss of generality we can even assume that the entanglement of the battery is conserved, as we can always mix the final state of the battery with a separable state to decrease the final battery entanglement. A similar scenario has been introduced in~\cite{alhambra2019entanglement}, without the entanglement non-decreasing condition on the battery.

In the asymptotic setting, we say that $\rho$ can be converted into $\sigma$ with an achievable rate $r$ via LOCC with entanglement battery, if for any $\varepsilon,\delta > 0$, there are integers $m, n$, an LOCC protocol $\Lambda$, and a battery state $\tau$ such that
\begin{subequations} \label{eq:Asymptotic}
\begin{align}
\Lambda\left(\rho^{\otimes n}\otimes\tau^{C}\right) & =\mu^{S_{1}\ldots S_{m}}\otimes\tilde{\tau}^{C},\\
\left\Vert \mu^{S_{1}\ldots S_{m}}-\sigma^{\otimes m}\right\Vert _{1} & <\varepsilon,\\
E(\tilde{\tau}^{C}) & \geq E(\tau^{C}),\label{eq:Asymptotic-E}\\
\frac{m}{n}& >r-\delta .
\end{align}
\end{subequations}
Here, each system $S_i$ denotes a copy of the bipartite system $AB$, and $C$ denotes the battery system, which is also bipartite.
Furthermore, entanglement in the battery is measured by a fixed measure $E$.
The supremum over all achievable rates $r$ is denoted by $R(\rho \rightarrow \sigma)$. 

In the setting defined above, the battery is not correlated with the main system at the end of the procedure. In the Supplemental Material, we discuss the more general setting where correlations between the main system and the battery are taken into account.

\subsection{Proof of Theorems 1 and 2 of the main text}

We begin with the assumption that $E$ is an additive and asymptotically continuous measure~\cite{Synak-Radtke_2006}, i.e., 
\begin{align}
E^{AA'|BB'}(\rho^{AB}\otimes\sigma^{A'B'}) & =E^{A|B}(\rho^{AB})+E^{A'|B'}(\sigma^{A'B'}),\\
\left|E(\rho)-E(\sigma)\right| & \leq K\left\Vert \rho-\sigma\right\Vert _{1}\log_{2}d+f(\left\Vert \rho-\sigma\right\Vert _{1}).
\end{align}
Here, $K > 0$ is a constant, $d$ is the dimension of the Hilbert space, and $f(x)$ is some function which does not depend on $d$ and vanishes in the limit $x \rightarrow 0$.

To prove Theorem~1 of the main text, let us first assume that $\rho^{AB}$ can be converted into $\sigma^{AB}$ via LOCC with entanglement battery. Then, there is an LOCC protocol $\Lambda$ and states $\tau^{A'B'}$ and $\tilde\tau^{A'B'}$ such that Eqs.~(\ref{eq:LOCCEB-1}) and ~(\ref{eq:LOCCEB-2}) are fulfilled. Using the additivity of $E$, finiteness and its monotonicity under LOCC, we obtain
\begin{align}
E(\sigma^{AB}) & =E(\sigma^{AB}\otimes\tilde{\tau}^{A'B'})-E(\tilde{\tau}^{A'B'})\\
 & \leq E(\rho^{AB}\otimes\tau^{A'B'})-E(\tilde{\tau}^{A'B'}) \leq E(\rho^{AB}).\nonumber 
\end{align}
This shows that the amount of entanglement in the main system $AB$ cannot increase in this procedure.

To prove the converse, let $\rho$ and $\sigma$ be two states fulfilling 
\begin{equation}
E(\rho) \geq E(\sigma). \label{eq:Proof-1}
\end{equation}
A conversion $\rho \rightarrow \sigma$ can be achieved by choosing
\begin{equation}
\tau^{A'B'}=\sigma^{A'B'},
\end{equation}
and the LOCC protocol consists of local permutations of $A$ and $A'$ on Alice's side, and correspondingly $B$ and $B'$ on Bob's side. Performing this protocol, the overall initial state $\rho^{AB} \otimes \sigma^{A'B'}$ is converted into $\sigma^{AB}\otimes \rho^{A'B'}$. Thus, the state of the battery at the end of the process is given by
\begin{equation}
\tilde{\tau}^{A'B'}=\rho^{A'B'}.
\end{equation}
Due to Eq.~(\ref{eq:Proof-1}) we have $E(\tilde \tau) \geq E(\tau)$ as required, and thus this protocol achieves the transformation $\rho^{AB} \rightarrow \sigma^{AB}$.
This completes the proof of Theorem~1 of the main text.

Theorem~1 further implies that a state $\rho\otimes\mu$ can be converted into another state $\sigma \otimes \mu'$ via LOCC with entanglement battery if and only if $E(\rho)-E(\sigma)\geq E(\mu')-E(\mu)$. Here, the system in the state $\mu$ can be considered as an additional component of a battery. While $E(\rho)-E(\sigma) > 0$, i.e. the main system loses entanglement during the procedure, the battery is capable of storing entanglement to compensate for this loss. 

We will now prove Theorem~2 of the main text, showing that for any entanglement measure which is additive and asymptotically continuous, the asymptotic transformation rates take the form 
\begin{equation}
R(\rho\rightarrow\sigma)=\frac{E(\rho)}{E(\sigma)}.
\end{equation}
For this, we will first show that the rate is upper bounded by $E(\rho)/E(\sigma)$. From Eqs.~(\ref{eq:Asymptotic}), additivity of $E$, and the fact that $E$ is nonincreasing under LOCC, it follows that 
\begin{align}
E\left(\mu^{S_{1}\ldots S_{m}}\right)+E\left(\tilde{\tau}^{C}\right) & =E\left(\mu^{S_{1}\ldots S_{m}}\otimes\tilde{\tau}^{C}\right)\leq E\left(\rho^{\otimes n}\otimes\tau^{C}\right)\nonumber \\
 & =nE(\rho)+E(\tau),
\end{align}
which together with Eq.~(\ref{eq:Asymptotic-E}) implies 
\begin{equation}
E\left(\mu^{S_{1}\ldots S_{m}}\right)\leq nE(\rho).
\end{equation}
Using again Eqs.~(\ref{eq:Asymptotic}) with asymptotic continuity of $E$ we arrive at
\begin{equation}
E\left(\sigma^{\otimes m}\right)\leq E\left(\mu^{S_{1}\ldots S_{m}}\right)+K\varepsilon m\log_{2}d+f(\varepsilon),
\end{equation}
where $d$ is the dimension of $S_i$. Combining these results we obtain
\begin{equation}
mE(\sigma)=E\left(\sigma^{\otimes m}\right)\leq nE(\rho)+K\varepsilon m\log_{2}d+f(\varepsilon),
\end{equation}
which can also be expressed as
\begin{equation}
\frac{m}{n}\leq\frac{E(\rho) + \frac{f(\varepsilon)}{n}}{E(\sigma) - \varepsilon K \log_2 d},
\end{equation}
when
\begin{equation}
0<\varepsilon<\frac{E(\sigma)}{K\log_{2}d} \label{eq:EpsilonProof}.
\end{equation}
Finally, using Eqs.~(\ref{eq:Asymptotic}), we see that any feasible rate $r$ must fulfill
\begin{equation}
r<\frac{E(\rho) + \frac{f(\varepsilon)}{n}}{E(\sigma) - \varepsilon K \log_2 d} + \delta.
\end{equation}
Recalling that we can choose arbitrary $\delta > 0$ and $\varepsilon$ in the range~(\ref{eq:EpsilonProof}), it follows that the asymptotic transformation rate is upper bounded by $E(\rho)/E(\sigma)$, as claimed.

We will now present a protocol achieving conversion at rate $E(\rho)/E(\sigma)$. Assume first that $E(\rho)/E(\sigma)$ is a rational number, i.e., there exist integers $m$ and $n$ such that 
\begin{equation}
\frac{m}{n}=\frac{E(\rho)}{E(\sigma)}. \label{eq:Rational}
\end{equation}
In this case, we can choose the initial state of the battery to be $\tau = \sigma^{\otimes m}$, and the total initial state is $\rho^{\otimes n} \otimes \sigma^{\otimes m}$~\footnote{
If $\rho^{\otimes n}$ and $\sigma^{\otimes m}$ have different dimensions,
product states are appended accordingly, making the dimensions of
the main system and the battery equal.}. 
Similar to the proof of Theorem~1, Alice and Bob now apply local permutations, permuting the main system and the battery. The final state is given by $\sigma^{\otimes m} \otimes \rho^{\otimes n}$, where the battery is now in the state $\tilde \tau = \rho^{\otimes n}$. The final amount of entanglement of the battery is given by $E(\tilde \tau) = nE(\rho)$, which is equal to the initial amount of entanglement $E(\tau) = mE(\sigma)$ due to Eq.~(\ref{eq:Rational}). This proves that for rational $E(\rho)/E(\sigma)$ a transformation with this rate is achievable.

If $E(\rho)/E(\sigma)$ is irrational, then for any $\varepsilon > 0$ there are integers $m$ and $n$ such that 
\begin{equation}
\frac{E(\rho)}{E(\sigma)}-\varepsilon<\frac{m}{n}<\frac{E(\rho)} {E(\sigma)}. \label{eq:Irrational}
\end{equation}
Alice and Bob now use the same procedure as in the rational case. The amount of entanglement in the battery does not decrease in this procedure, since $nE(\rho) > mE(\sigma)$ due to Eq.~(\ref{eq:Irrational}). Since $\varepsilon > 0$ in Eq.~(\ref{eq:Irrational}) can be chosen arbitrarily, Alice and Bob can also achieve conversion at rate $E(\rho)/E(\sigma)$ in this case.

% \end{document}
% \documentclass[american,aps,prl,reprint,superscriptaddress]{revtex4-2}
% \usepackage{xr}
% \usepackage[T1]{fontenc}
% \usepackage[utf8]{inputenc}
% \setcounter{secnumdepth}{3}
% \usepackage{xcolor}
% \usepackage{babel}
% \usepackage{physics}
% \usepackage{amsmath}
% \usepackage{amsthm}
% \usepackage{amssymb}
% \usepackage[unicode=true,pdfusetitle,
%  bookmarks=true,bookmarksnumbered=false,bookmarksopen=false,
%  breaklinks=false,pdfborder={0 0 0},pdfborderstyle={},backref=false,colorlinks=true]
%  {hyperref}
% \hypersetup{
%  allcolors=magenta}

% \makeatletter
% %%%%%%%%%%%%%%%%%%%%%%%%%%%%%% User specified LaTeX commands.
% \theoremstyle{plain}
% \newtheorem{thm}{\protect\theoremname}
% \theoremstyle{plain}
% \newtheorem{prop}[thm]{\protect\propositionname}

% \usepackage{times}
% \usepackage{txfonts}
% \usepackage{braket}
% \usepackage{colortbl}
% \usepackage{float}
% \usepackage{graphicx}

% \externaldocument{PRL-v2}

% %\newcommand{\alex}[1]{{\leavevmode\color{blue}[#1 - AS]}}
% \newcommand{\ray}[1]{{\leavevmode\color{olive}[#1 - RG]}}
% %\newcommand{\varun}[1]{{\leavevmode\color{red}[#1 - Varun]}}
% %\newcommand{\nn}[1]{\textcolor{blue}{#1}}

% \makeatother

% \providecommand{\theoremname}{Theorem}
% \providecommand{\propositionname}{Proposition}

% \renewcommand{\theequation}{S\arabic{equation}}

\appendix
% \begin{document}

% \makeatother
% \renewcommand{\theequation}{S\arabic{equation}}
% \renewcommand{\thethm}{S\arabic{thm}}
% \renewcommand{\thefigure}{S\arabic{figure}}
% % \setcounter{page}{1}
% \makeatletter

\section*{Supplemental Material}

\subsection{Multiple measures on the battery}
Let us explore what happens when we impose two non-equivalent~\cite{Virmani_2000} entanglement measures on the battery.
By non-equivalent, we mean that we can find two states $\rho, \sigma$ such that $E_1 (\rho) > E_1(\sigma)$, but $E_2(\rho) < E_2(\sigma)$.
By following the arguments in Theorem~2, we can show that the transformation rates are bounded as follows
\begin{align}
  R(\rho \to \sigma) &\leq \min\Bqty{
                       \frac{E_1(\rho)}{E_1(\sigma)},
                       \frac{E_2(\rho)}{E_2(\sigma)}
                       }
  = \frac{E_2(\rho)}{E_2(\sigma)} < 1,\\
  R(\sigma \to \rho) &\leq \min\Bqty{
                       \frac{E_1(\sigma)}{E_1(\rho)},
                       \frac{E_2(\sigma)}{E_2(\rho)}
                       }
                       = \frac{E_1(\sigma)}{E_1(\rho)} < 1,
\end{align}
so that $R(\rho \to \sigma) R(\sigma \to \rho) < 1$.
This shows that the battery-assisted transformations only becomes reversible when we impose the non-decreasing condition on only one entanglement measure.

\subsection{Proof of Theorem~3 of the main text}
We know that the free energy $F$ is an additive monotone for thermal operations.
Furthermore, when $0 < T < \infty$, then $F$ is always finite.
Therefore, we can repeat the arguments of Theorem~1 to obtain the only if direction.

For the if direction, we simply note that swapping the system and battery along with their Hamiltonians is allowed in thermal operations.
Then, we can run the protocol in the proof of Theorem~1 to show the if direction.

\subsection{Zero error conversion}
We will now consider a more restricted version of the conversion problem, which we term \emph{zero error conversion}. We say that $\rho$ can be converted into $\sigma$ with zero error via LOCC with entanglement battery if for any $\delta > 0$, there exist integers $m$, $n$, an LOCC protocol $\Lambda$, and states of the battery $\tau$ and $\tilde \tau$ such that
\begin{subequations} \label{eq:ZeroError}
\begin{align}
\Lambda\left(\rho^{\otimes n}\otimes\tau^{C}\right) & =\sigma^{\otimes m}\otimes\tilde\tau^{C},\\
E(\tilde\tau^{C}) & \geq E(\tau^{C}),\\
\frac{m}{n}+\delta & >r.
\end{align}
\end{subequations}
In comparison to Eqs.~(\ref*{eq:Asymptotic}), no error is allowed in the final state, i.e., the state at the end of the procedure should be exactly $\sigma^{\otimes m}\otimes\tilde\tau^{C}$. The supremum over all feasible rates $r$ in this process will be called $R_{\mathrm{ze}}(\rho \rightarrow \sigma)$. 

For zero-error conversion, Theorem~2 is true for any additive entanglement measure, i.e., asymptotic continuity is not required, and it holds that
\begin{equation}
R_{\mathrm{ze}}(\rho\rightarrow\sigma)=\frac{E(\rho)}{E(\sigma)}.
\end{equation}
To see this, we can write 
\begin{align}
mE(\sigma)+E(\tilde{\tau}) & =E\left(\sigma^{\otimes m}\otimes\tilde{\tau}\right)\leq E\left(\rho^{\otimes n}\otimes\tau\right)\\
 & =nE(\rho)+E(\tau),\nonumber 
\end{align}
which implies that 
\begin{equation}
\frac{m}{n}\leq\frac{E(\rho)}{E(\sigma)}.
\end{equation}
This means that the maximal transformation rate is upper bounded by $E(\rho)/E(\sigma)$. The converse can be seen with the same protocol as in the proof of Theorem~2, see the discussion below Eq.~(\ref*{eq:Rational}).

\subsection{Entanglement measures which are additive and asymptotically continuous}
As mentioned in the main text, an example for an entanglement measure which is additive and asymptotically continuous is the squashed entanglement given in Eq.~(5) of the main text, with the quantum conditional mutual information of a state $\rho^{ABE}$ defined as
\begin{equation}
I(A;B|E)=S(\rho^{AE})+S(\rho^{BE})-S(\rho^{ABE})-S(\rho^{E}).
\end{equation}
Additivity of the squashed entanglement on all states has been proven in~\cite{10.1063/1.1643788}, while asymptotic continuity is a direct consequence of the results presented in~\cite{Alicki_2004}. Moreover, squashed entanglement is zero on separable states, and is larger than zero otherwise~\cite{Li2014}.

Another example is the conditional entanglement of mutual information, defined as~\cite{PhysRevLett.101.140501,yang2007conditional} 
\begin{equation}
E(\rho^{AB})=\inf\frac{1}{2}\left\{ I\left(AA':BB'\right)-I\left(A':B'\right)\right\} ,
\end{equation}
where the infimum is taken over all extensions $\rho^{AA'BB'}$ of the state $\rho^{AB}$, and $I(X:Y)$ denotes the quantum mutual information of the state $\rho^{XY}$:
\begin{equation}
I(X:Y)=S(\rho^{X})+S(\rho^{Y})-S(\rho^{XY}).
\end{equation}
Additivity and asymptotic continuity of this entanglement measure has been proven in~\cite{PhysRevLett.101.140501,yang2007conditional}. Moreover, this measures is zero on separable states and larger than zero otherwise, as follows from the fact that it vanishes on separable states~\cite{PhysRevLett.101.140501} and is lower bounded by squashed entanglement~\cite{yang2007conditional}.

\subsection{Other entanglement measures}

Many entanglement measures known in the literature are not generally additive or do not satisfy asymptotic continuity. However, we demonstrate that the presented approach can be applied to some commonly used measures, even if these properties are not fulfilled, or not known to hold. An important example is the relative entropy of entanglement~\cite{PhysRevLett.78.2275}, defined as
\begin{equation}
    E_{\mathrm r}(\rho)=\min_{\sigma\in\mathcal{S}}S(\rho||\sigma),
\end{equation}
where $\mathcal S$ is the set of separable states, and $S(\rho||\sigma)=\mathrm{Tr}(\rho\log_{2}\rho)-\mathrm{Tr}(\rho\log_{2}\sigma)$ is the quantum relative entropy.
Since the relative entropy of entanglement is not additive~\cite{PhysRevA.64.062307}, our Theorems do not directly apply in this case.
Nevertheless, as we show below, quantifying the amount of entanglement in the battery by the relative entropy of entanglement leads to a theory with bounded distillation rates, in particular the asymptotic rate for converting a state $\rho$ into $\ket{\psi^-}$ is upper bounded by $E_\mathrm{r}(\rho)$.
Moreover, we show that $E_{\mathrm{r}}^{\infty}(\rho)/E_{\mathrm{r}}^{\infty}(\sigma)$ is a feasible transformation rate for any two entangled states $\rho$ and $\sigma$, where $E_\mathrm{r}^\infty$ is the regularized relative entropy of entanglement defined in Eq.~(6) of the main text.
The same rate is also feasible if the amount of entanglement in the battery is quantified via $E_{\mathrm{r}}^{\infty}$.
In this setting, we also obtain a theory with bounded distillation rates. We note that for some states $\rho$ and $\sigma$ the asymptotic transformation rate might exceed $E_{\mathrm{r}}^{\infty}(\rho)/E_{\mathrm{r}}^{\infty}(\sigma)$. 

Another frequently used entanglement measure in the literature is the logarithmic negativity~\cite{PhysRevA.58.883,PhysRevA.65.032314,PhysRevLett.95.090503}, defined as
\begin{equation}
E_{\mathrm{n}}(\rho)=\log_{2}\left\Vert \rho^{T_{A}}\right\Vert _{1},
\end{equation}
where $T_A$ denotes partial transposition. Although this measure is additive, it does not satisfy asymptotic continuity~\cite{Lami2023}. As discussed above in the Supplemental Material, the additivity of $E_\mathrm{n}$ implies a reversible framework for zero error transformations. Below, we demonstrate that the entanglement distillation rate is bounded even when allowing for an error margin in the transformation. Specifically, the asymptotic rate for the conversion $\rho \rightarrow \ket{\psi^-}$ is given by $E_\mathrm{n}(\rho)$. We further note that logarithmic negativity vanishes if and only if the state has positive partial transpose (PPT), which means that this framework allows for the creation of PPT states from separable states.

We will now show that measuring entanglement with logarithmic negativity entails a curious phenomenon, where it is possible to dilute a state $\rho$ into more copies of itself.
More precisely, we demonstrate an asymptotic conversion of $n$ copies of a quantum state $\rho$ into a noisy state which is close to $m/n$ copies of the initial state $\rho$, where $m > n$.
Note that the error may vanishes only in the limit $n \rightarrow \infty$.
We call this phenomenon \emph{self-dilution}. We emphasize that this effect does not lead to embezzling of entanglement, since the entanglement distillation rates are finite for any given state.

\begin{figure}
\centering
\includegraphics[width=0.9\columnwidth]{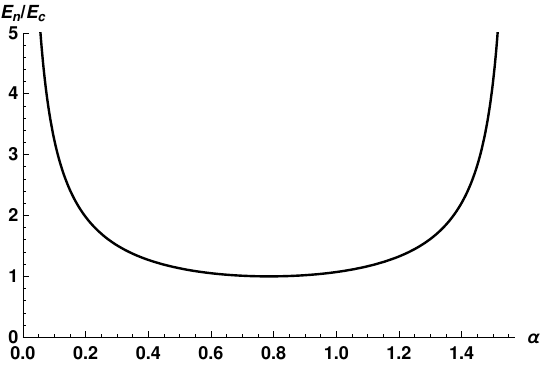}

\caption{\label{fig:Self-Dilution}Self-dilution rate $E_{\mathrm{n}}(\ket{\psi})/E_{\mathrm{c}}(\ket{\psi})$ for $\ket{\psi}=\cos\alpha\ket{00}+\sin\alpha\ket{11}$ as a function of $\alpha$. }

\end{figure}

First, recall that we can convert any state $\rho$ into singlets at rate $E_\mathrm n(\rho)$ in this setup.
As discussed above in the Supplemental Material, this conversion can be achieved with zero error.
Moreover, by using LOCC it is possible to convert singlets approximately into the state $\rho$ at rate $1/E_{\mathrm{c}}(\rho)$, where $E_{\mathrm c}$ is the entanglement cost~\cite{Plenioquant-ph/0504163}.
Thus, performing these operations in sequence will convert $\rho^{\otimes n}$ into a state which is close to $nE_{\mathrm{n}}(\rho)/E_{\mathrm{c}}(\rho)$ copies of $\rho$, and the error can be made arbitrarily small in the limit of large $n$.
Self-dilution of $\rho$ occurs whenever $E_{\mathrm{n}}(\rho)/E_{\mathrm{c}}(\rho)>1$. 

In Fig.~\ref{fig:Self-Dilution} we show the rate $E_{\mathrm{n}}(\ket{\psi})/E_{\mathrm{c}}(\ket{\psi})$ as a function of $\alpha$ for the states  $\ket{\psi}=\cos\alpha\ket{00}+\sin\alpha\ket{11}$. The rate is above one as long as $\ket{\psi}$ is not maximally entangled, and diverges in the limit $\alpha \rightarrow 0$.

This phenomenon is not unique to logarithmic negativity, or even entanglement theory.
By analogous arguments, we can show that self-dilution occurs in thermodynamics for incoherent states if we use $F_{\textrm{max}} (\rho) = \inf \Bqty{\log \lambda \,|\, \rho \leq \lambda \gamma}$ to quantify the resources in the battery.
Similarly to logarithmic negativity, $F_{\textrm{max}} (\rho)$ is not asymptotically continuous and has the operational interpretation as the exact cost of preparing $\rho$ from many copies of $\ketbra{1}$~\cite{horodecki2013fundamental}.
This will be discussed in more detail below in the Supplemental Material

While counter intuitive, self-dilution does not mean that there are no cost associated with creating more copies of $\rho$.
This is because it is not clear that we can repeat the protocol to obtain even more copies of $\rho$, and in fact this is forbidden since the distillation rate to the singlet state is bounded.
Physically, the battery is providing the extra entanglement/work that is needed to create more copies of $\rho$.
Therefore, this scenario is relevant when only certain types of resource in the battery are scarce~\cite{Ng_2017}, even in the asymptotic limit.

\subsubsection{Relative entropy of entanglement}
Let us discuss the consequences of constraining the amount of entanglement in the battery by the relative entropy of entanglement defined as~\cite{PhysRevLett.78.2275}
\begin{equation}
E_{\mathrm r}(\rho)=\min_{\sigma\in\mathcal{S}}S(\rho||\sigma)
\end{equation}
with the quantum relative entropy
\begin{equation}
S(\rho||\sigma)=\mathrm{Tr}(\rho\log_{2}\rho)-\mathrm{Tr}(\rho\log_{2}\sigma),
\end{equation}
and $\mathcal S$ denotes the set of separable states. 
In particular, we are interested in asymptotic transformations with rates defined in Eqs.~(\ref*{eq:Asymptotic}). This means that condition~(\ref*{eq:Asymptotic-E}) is replaced by 
\begin{equation}
E_{\mathrm{r}}(\tilde{\tau})\geq E_{\mathrm{r}}(\tau). \label{eq:RelativeEntropyProof}
\end{equation}
While the non-additivity of relative entropy of entanglement prevents a straightforward application of our Theorems, we can show that this framework gives rise to a nontrivial resource theory, with bounded entanglement distillation rates. For this, we recall some useful properties of the relative entropy of entanglement. In particular, $E_{\mathrm r}$ does not increase under LOCC~\cite{PhysRevLett.78.2275} and is subadditive, i.e., for any two states $\rho$ and $\tau$ it holds that
\begin{equation}
E_{\mathrm{r}}^{AA'|BB'}(\rho^{AB}\otimes\tau^{A'B'})\leq E_{\mathrm{r}}^{A|B}(\rho^{AB})+E_{\mathrm{r}}^{A'|B'}(\tau^{A'B'}).
\end{equation}
In addition, the inequality becomes an equality if (at least) one of the states is pure~\cite{rubboli2023new}.

Additionally, we will use the following continuity property which was proven implicitly in~\cite{Winter2016}.
For completeness, we will provide a proof.

\begin{prop} \label{prop:ER}
For any two states $\rho^{AB}$ and $\sigma^{AB}$ with $\frac{1}{2}||\rho^{AB}-\sigma^{AB}||_{1}\leq\varepsilon$
and any $\tau^{A'B'}$, it holds 
\begin{align}
 & \left|E_{\mathrm{r}}^{AA'|BB'}\left(\rho^{AB}\otimes\tau^{A'B'}\right)-E_{\mathrm{r}}^{AA'|BB'}\left(\sigma^{AB}\otimes\tau^{A'B'}\right)\right|\nonumber \\
 & \leq\varepsilon\log_{2}d_{AB}+(1+\varepsilon)h\left(\frac{\varepsilon}{1+\varepsilon}\right),
\end{align}
where $h(x)=-x\log_{2}x-(1-x)\log_{2}(1-x)$ is the binary entropy and $d_{AB}$ is the dimension of $AB$.
\end{prop}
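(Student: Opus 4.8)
The plan is to run an Alicki--Fannes--Winter type argument for $E_{\mathrm r}$, arranged so that the common factor $\tau$ prevents the battery dimension from entering the estimate. Set $\varepsilon'=\tfrac12\|\rho^{AB}-\sigma^{AB}\|_1\le\varepsilon$; if $\varepsilon'=0$ the claim is trivial, so assume $\varepsilon'>0$ and put $p=\varepsilon'/(1+\varepsilon')\in(0,\tfrac12]$. Let $\Delta_\pm^{AB}=(\rho-\sigma)_\pm/\varepsilon'$ be the normalized positive and negative parts of $\rho-\sigma$; these are states, and $\alpha^{AB}:=(1-p)\rho+p\Delta_-=(1-p)\sigma+p\Delta_+$ is a state with $\rho\le(1+\varepsilon')\alpha$ and $\sigma\le(1+\varepsilon')\alpha$. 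Thus $\alpha\otimes\tau$ is a binary $p$-mixture that can be built either from $\{\rho\otimes\tau,\Delta_-\otimes\tau\}$ or from $\{\sigma\otimes\tau,\Delta_+\otimes\tau\}$.

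The first step is a two-sided estimate of the averaged quantity $(1-p)E_{\mathrm r}(\rho\otimes\tau)+pE_{\mathrm r}(\Delta_-\otimes\tau)$, and likewise with $\sigma,\Delta_+$. Convexity of $E_{\mathrm r}$ (immediate from joint convexity of the relative entropy together with convexity of the separable set) gives $E_{\mathrm r}(\alpha\otimes\tau)\le(1-p)E_{\mathrm r}(\rho\otimes\tau)+pE_{\mathrm r}(\Delta_-\otimes\tau)$. For the reverse direction I would pick an optimal separable state $\omega^{*}$ for $\alpha\otimes\tau$ (it exists by compactness; it is finite since $E_{\mathrm r}(\alpha\otimes\tau)$ is bounded by $\log_2$ of the total dimension, and its support contains that of $\rho\otimes\tau$ and of $\Delta_\pm\otimes\tau$ because $\rho,\Delta_-\le(1+\varepsilon')\alpha$, so all relative entropies below are finite) and apply Donald's identity: $(1-p)S(\rho\otimes\tau\|\omega^{*})+pS(\Delta_-\otimes\tau\|\omega^{*})=\chi+E_{\mathrm r}(\alpha\otimes\tau)$, where $\chi$ is the Holevo quantity of the binary ensemble and hence $\chi\le h(p)$. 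Since $\omega^{*}$ is separable, the left-hand side dominates $(1-p)E_{\mathrm r}(\rho\otimes\tau)+pE_{\mathrm r}(\Delta_-\otimes\tau)$, whence $(1-p)E_{\mathrm r}(\rho\otimes\tau)+pE_{\mathrm r}(\Delta_-\otimes\tau)\le E_{\mathrm r}(\alpha\otimes\tau)+h(p)$, and the analogous bound holds with $\sigma,\Delta_+$.

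Chaining the convexity bound written with $\rho,\Delta_-$ against the Donald-identity bound written with $\sigma,\Delta_+$, the term $E_{\mathrm r}(\alpha\otimes\tau)$ cancels and yields $(1-p)\big(E_{\mathrm r}(\sigma\otimes\tau)-E_{\mathrm r}(\rho\otimes\tau)\big)\le p\big(E_{\mathrm r}(\Delta_-\otimes\tau)-E_{\mathrm r}(\Delta_+\otimes\tau)\big)+h(p)$. Here is the step that keeps only $d_{AB}$: by subadditivity $E_{\mathrm r}(\Delta_-\otimes\tau)\le E_{\mathrm r}(\Delta_-)+E_{\mathrm r}(\tau)$, and by LOCC monotonicity (discard $AB$) $E_{\mathrm r}(\Delta_+\otimes\tau)\ge E_{\mathrm r}(\tau)$, so the battery contribution cancels and $E_{\mathrm r}(\Delta_-\otimes\tau)-E_{\mathrm r}(\Delta_+\otimes\tau)\le E_{\mathrm r}(\Delta_-)\le\log_2 d_{AB}$ (using $I_{AB}/d_{AB}$ as a separable witness). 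Dividing by $1-p$ and using $p/(1-p)=\varepsilon'$ and $1/(1-p)=1+\varepsilon'$ gives $E_{\mathrm r}(\sigma\otimes\tau)-E_{\mathrm r}(\rho\otimes\tau)\le\varepsilon'\log_2 d_{AB}+(1+\varepsilon')h\!\big(\tfrac{\varepsilon'}{1+\varepsilon'}\big)$. The opposite inequality follows by repeating the argument with $\rho$ and $\sigma$ interchanged, and passing from $\varepsilon'$ to $\varepsilon$ is permitted because $x\mapsto x\log_2 d_{AB}+(1+x)h\!\big(\tfrac{x}{1+x}\big)$ has derivative $\log_2 d_{AB}+\log_2\tfrac{1+x}{x}>0$ on $(0,1]$ and vanishes at $0$.

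I expect the only non-mechanical point to be the sandwiching step in the third paragraph: one has to notice that the battery's own entanglement $E_{\mathrm r}(\tau)$ drops out between the subadditive upper bound on $E_{\mathrm r}(\Delta_-\otimes\tau)$ and the LOCC-monotonicity lower bound on $E_{\mathrm r}(\Delta_+\otimes\tau)$, which collapses the dimensional term from $\log_2(d_{AB}d_{A'B'})$ — what a direct application of the generic relative-entropy continuity bound would give — down to $\log_2 d_{AB}$. Everything else (existence and support properties of $\omega^{*}$, convexity of $E_{\mathrm r}$, Donald's identity, the Holevo bound $\chi\le h(p)$, and the subadditivity and LOCC monotonicity of $E_{\mathrm r}$ recalled just above the statement) is standard.
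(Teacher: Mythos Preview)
Your proof is correct and follows the same Alicki--Fannes--Winter scheme as the paper: build an intermediate binary mixture, sandwich $E_{\mathrm r}$ between convexity and almost-concavity, and then cancel the battery's contribution so that only $\log_2 d_{AB}$ survives. The only differences are presentational---you invoke Donald's identity where the paper writes out the entropy computation by hand, and for the key cancellation you use subadditivity $E_{\mathrm r}(\Delta_-\otimes\tau)\le E_{\mathrm r}(\Delta_-)+E_{\mathrm r}(\tau)$ while the paper instead routes through $E_{\mathrm r}(\Delta\otimes\tau)\le E_{\mathrm r}(\Phi_{d_{AB}}\otimes\tau)=\log_2 d_{AB}+E_{\mathrm r}(\tau)$ via LOCC monotonicity and exact additivity when one factor is pure.
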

\begin{proof}
The proof is implicitly contained in the proofs of Lemma 2 and Lemma
7 in Ref.~\cite{Winter2016}. Without loss of generality we can set $||\rho^{AB}-\sigma^{AB}||_{1}=2\varepsilon$.
We now define the state
\begin{equation}
\Delta=\frac{1}{\varepsilon}(\rho-\sigma)_{+},
\end{equation}
where $(\rho-\sigma)_{+}$ denotes the positive part of $\rho-\sigma$.
Using the same arguments as in the proof of Lemma 2 in Ref.~\cite{Winter2016}, we
find
\begin{align}
\rho\otimes\tau & =\sigma\otimes\tau+(\rho-\sigma)\otimes\tau\nonumber \\
 & \leq\sigma\otimes\tau+\varepsilon\Delta\otimes\tau\nonumber \\
 & =(1+\varepsilon)\omega\otimes\tau, \label{eq:ProofEr-1}
\end{align}
where we have defined the state 
\begin{equation}
\omega=\frac{1}{1+\varepsilon}\sigma+\frac{\varepsilon}{1+\varepsilon}\Delta.
\end{equation}
Due to Eq.~(\ref{eq:ProofEr-1}), we can define another state 
\begin{equation}
\Delta'=\frac{1+\varepsilon}{\varepsilon}\omega-\frac{1}{\varepsilon}\rho,
\end{equation}
which implies that $\omega$ can also be written as
\begin{equation}
\omega=\frac{1}{1+\varepsilon}\rho+\frac{\varepsilon}{1+\varepsilon}\Delta'.
\end{equation}
Using convexity of the relative entropy of entanglement~\cite{PhysRevA.57.1619}, we find
\begin{equation}
E_{\mathrm{r}}(\omega\otimes\tau)\leq\frac{1}{1+\varepsilon}E_{\mathrm{r}}(\sigma\otimes\tau)+\frac{\varepsilon}{1+\varepsilon}E_{\mathrm{r}}(\Delta\otimes\tau).\label{eq:ProofEr-2}
\end{equation}

Now, let $\gamma\in\mathcal{S}$ be a separable state such that $E_{\mathrm{r}}(\omega\otimes\tau)=S(\omega\otimes\tau||\gamma)$.
Recalling that the von Neumann entropy fulfills $S(\sum_{i}p_{i}\rho_{i})\leq\sum_{i}p_{i}S(\rho_{i})+H(p)$ with $H(p)=-\sum_{i}p_{i}\log_{2}p_{i}$, we obtain 
\begin{align}
E_{\mathrm{r}}(\omega\otimes\tau) & =-S(\omega\otimes\tau)-\mathrm{Tr}(\omega\otimes\tau\log_{2}\gamma)\label{eq:ProofEr-3}\\
 & \geq-h\left(\frac{\varepsilon}{1+\varepsilon}\right)-\frac{1}{1+\varepsilon}S(\rho\otimes\tau)-\frac{\varepsilon}{1+\varepsilon}S(\Delta'\otimes\tau)\nonumber \\
 & -\frac{1}{1+\varepsilon}\mathrm{Tr}(\rho\otimes\tau\log_{2}\gamma)-\frac{\varepsilon}{1+\varepsilon}\mathrm{Tr}(\Delta'\otimes\tau\log_{2}\gamma)\nonumber \\
 & =-h\left(\frac{\varepsilon}{1+\varepsilon}\right)+\frac{1}{1+\varepsilon}S\left(\rho\otimes\tau||\gamma\right)+\frac{\varepsilon}{1+\varepsilon}S\left(\Delta'\otimes\tau||\gamma\right)\nonumber \\
 & \geq-h\left(\frac{\varepsilon}{1+\varepsilon}\right)+\frac{1}{1+\varepsilon}E_{\mathrm{r}}(\rho\otimes\tau)+\frac{\varepsilon}{1+\varepsilon}E_{\mathrm{r}}(\Delta'\otimes\tau).\nonumber 
\end{align}
Using Eqs.~(\ref{eq:ProofEr-2}) and (\ref{eq:ProofEr-3}), we arrive
at the inequality
\begin{align}
E_{\mathrm{r}}(\rho\otimes\tau)-E_{\mathrm{r}}(\sigma\otimes\tau) & \leq\varepsilon\left[E_{\mathrm{r}}(\Delta\otimes\tau)-E_{\mathrm{r}}(\Delta'\otimes\tau)\right]\nonumber \\
 & +(1+\varepsilon)h\left(\frac{\varepsilon}{1+\varepsilon}\right).
\end{align}

Therefore, to complete the proof of the proposition, it is enough to show that 
\begin{equation}
E_{\mathrm{r}}(\Delta\otimes\tau)-E_{\mathrm{r}}(\Delta'\otimes\tau)\leq\log_{2}d_{AB}.\label{eq:ProofEr-4}
\end{equation}
For this, note that 
\begin{equation}
E_{\mathrm{r}}(\Delta\otimes\tau)\leq E_{\mathrm{r}}(\Phi_{d_{AB}}\otimes\tau)=\log_{2}d_{AB}+E_{\mathrm{r}}(\tau),
\end{equation}
where $\ket{\Phi_{d_{AB}}}=\sum_{i=0}^{d_{AB}-1}\ket{ii}/\sqrt{d_{AB}}$
is a maximally entangled state on $AB$. Using $E_{\mathrm{r}}(\Delta'\otimes\tau)\geq E_{\mathrm{r}}(\tau)$,
we arrive at Eq.~(\ref{eq:ProofEr-4}), and the proof is complete. 
\end{proof}

Now, consider an asymptotic conversion $\rho\rightarrow\phi^{+}$ with $\ket{\phi^{+}}=(\ket{00}+\ket{11})/\sqrt{2}$. Using Eqs.~(\ref*{eq:Asymptotic}) together with the properties of $E_{\mathrm r}$ mentioned above, we obtain
\begin{align} \label{eq:ProofEr-5}
m+E_{\mathrm{r}}\left(\tilde{\tau}\right) & =E_{\mathrm{r}}\left(\ket{\phi^{+}}\!\bra{\phi^{+}}^{\otimes m}\otimes\tilde{\tau}\right)\\
 & \leq E_{\mathrm{r}}\left(\mu^{S_{1}\ldots S_{m}}\otimes\tilde{\tau}\right)+\frac{\varepsilon}{2}m+\left[1+\frac{\varepsilon}{2}\right]h\left(\frac{\frac{\varepsilon}{2}}{1+\frac{\varepsilon}{2}}\right)\nonumber\nonumber \\
 & \leq E_{\mathrm{r}}\left(\rho^{\otimes n}\otimes\tau\right)+\frac{\varepsilon}{2}m+\left[1+\frac{\varepsilon}{2}\right]h\left(\frac{\frac{\varepsilon}{2}}{1+\frac{\varepsilon}{2}}\right)\nonumber\nonumber \\
 & \leq nE_{\mathrm{r}}\left(\rho\right)+E_{\mathrm{r}}\left(\tau\right)+\frac{\varepsilon}{2}m+\left[1+\frac{\varepsilon}{2}\right]h\left(\frac{\frac{\varepsilon}{2}}{1+\frac{\varepsilon}{2}}\right).\nonumber\nonumber 
\end{align}
This expression is equivalent to 
\begin{equation}
\frac{m}{n}\leq\frac{1}{1-\frac{\varepsilon}{2}}\left[E_{\mathrm{r}}\left(\rho\right)\!-\!\frac{1}{n}\left[E_{\mathrm{r}}\left(\tilde{\tau}\right)\!-\!E_{\mathrm{r}}\left(\tau\right)\right]+\frac{1}{n}\left[1\!+\!\frac{\varepsilon}{2}\right]h\left(\frac{\frac{\varepsilon}{2}}{1+\frac{\varepsilon}{2}}\right)\right].
\end{equation}
Using Eq.~(\ref{eq:RelativeEntropyProof}) we further obtain 
\begin{equation}
\frac{m}{n}\leq\frac{1}{1-\frac{\varepsilon}{2}}\left[E_{\mathrm{r}}\left(\rho\right)+\frac{1}{n}\left[1+\frac{\varepsilon}{2}\right]h\left(\frac{\frac{\varepsilon}{2}}{1+\frac{\varepsilon}{2}}\right)\right].
\end{equation}
Recalling that we can choose arbitrary $\varepsilon > 0$, these results imply that the asymptotic rate for converting $\rho$ into $\phi^+$ in this framework is bounded above by $E_{\mathrm{r}}(\rho)$, i.e., 
\begin{equation}\label{distil}
R(\rho\rightarrow\phi^{+})\leq E_{\mathrm{r}}(\rho).
\end{equation}

For any two entangled states $\rho$ and $\sigma$, we will now show that $E_{\mathrm{r}}^{\infty}(\rho)/E_{\mathrm{r}}^{\infty}(\sigma)$ is a feasible rate for the conversion $\rho \rightarrow \sigma$,
where $E_{\mathrm{r}}^{\infty}$ is the regularized relative entropy of entanglement
\begin{equation}
E_{\mathrm{r}}^{\infty}(\rho)=\lim_{n\rightarrow\infty}\frac{1}{n}E_{\mathrm{r}}(\rho^{\otimes n}). \label{eq:Ereg}
\end{equation}
For proving this, consider first two states $\rho$ and $\sigma$ such that
\begin{equation}
    E_{\mathrm{r}}^{\infty}(\rho)>E_{\mathrm{r}}^{\infty}(\sigma). \label{eq:ProofEreg}
\end{equation}
Then, it must be that $E_{\mathrm{r}}(\rho^{\otimes n})/n>E_{\mathrm{r}}(\sigma^{\otimes n})/n$ for all large enough $n$, and thus also 
\begin{equation}
    E_{\mathrm{r}}(\rho^{\otimes n})>E_{\mathrm{r}}(\sigma^{\otimes n}). \label{eq:ProofEr}
\end{equation}
It is now possible to convert $\rho^{\otimes n}$ into $\sigma^{\otimes n}$ using a similar protocol as in the proof of Theorem~2. For this, we choose the initial state of the battery to be $\tau = \sigma^{\otimes n}$, i.e., the total initial state is $\rho^{\otimes n}\otimes\sigma^{\otimes n}$. Permuting the main system and the battery (which can be achieved via local operations only), we obtain the final state $\sigma^{\otimes n}\otimes\rho^{\otimes n}$, where the battery is now in the state $\tilde{\tau}=\rho^{\otimes n}$. Due to Eq.~(\ref{eq:ProofEr}), it holds that $E(\tilde{\tau})>E(\tau)$, i.e., the amount of entanglement in the battery does not decrease. 

The above arguments show that for any two states fulfilling Eq.~(\ref{eq:ProofEreg}) the asymptotic rate for converting $\rho$ into $\sigma$ is at least one. Consider now two general states $\rho$ and $\sigma$, not necessarily fulfilling Eq.~(\ref{eq:ProofEreg}). For any $\varepsilon>0$ we can find two integers $k$ and $l$ such that 
\begin{equation}
1<\frac{E_{\mathrm{r}}^{\infty}(\rho^{\otimes k})}{E_{\mathrm{r}}^{\infty}(\sigma^{\otimes l})}<1+\varepsilon.
\end{equation}
Following the same reasoning as above, it is possible to convert the state $\rho^{\otimes k}$ into $\sigma^{\otimes l}$ with rate at least one, which means that $\rho$ can be converted into $\sigma$ with rate at least $l/k$. Recall that $E_{\mathrm{r}}^{\infty}$ is additive on the same state, i.e., $E_{\mathrm{r}}^{\infty}(\rho^{\otimes k})=kE_{\mathrm{r}}^{\infty}(\rho)$ and similar for $\sigma$. It follows that 
\begin{equation}
\frac{l}{k}>\frac{E_{\mathrm{r}}^{\infty}(\rho)}{(1+\varepsilon)E_{\mathrm{r}}^{\infty}(\sigma)},
\end{equation}
which means that $E_{\mathrm{r}}^{\infty}(\rho)/E_{\mathrm{r}}^{\infty}(\sigma)$ is a feasible transformation rate in this setting.

In Ref.~\cite{Lami2023}, an example of a state with different distillable entanglement and entanglement cost with non-entangling operations was presented.
The state is $\rho = \frac{1}{6} \sum_{i,j=1}^3 \pqty{\ketbra{ii} - \ket{ii}\!\bra{jj}}$, which is a maximally correlated state, with distillable entanglement $E_d (\rho) = \log{3/2}$ and entanglement cost $E_c (\rho) = 1$.
Let us compare what happens if we have access to an entanglement battery, quantified by relative entropy.
Since the relative entropy of entanglement is additive if one of the states is a maximally correlated state~\cite{rubboli2023new}, the arguments above show that transformations between maximally correlated states are reversible.
This means that we can reversibly convert the state $\rho$ to a singlet at a rate $E_r(\rho) = \log{3/2}$ with the help of an entanglement battery.
This is in contrasts to the non-entangling operations case, where the entanglement cost is strictly higher.

Let us now analyze the transformation rates if the amount of entanglement in the battery is quantified via the regularized relative entropy of entanglement, i.e., Eq.~(\ref*{eq:Asymptotic-E}) is replaced by 
\begin{equation}
E_{\mathrm{r}}^{\infty}(\tilde{\tau})\geq E_{\mathrm{r}}^{\infty}(\tau).
\end{equation}
Since it is not known if the regularized relative entropy of entanglement is fully additive, this prevents a straightforward application of our Theorems.
Nevertheless, we will see in the following that we obtain a nontrivial theory of entanglement manipulations, with bounded asymptotic entanglement distillation rates. Similar to the non-regularized version, note that $E_{\mathrm{r}}^{\infty}$ is subadditive, i.e.,  
\begin{equation}
E_{\mathrm{r}}^{\infty}(\rho\otimes\tau)\leq E_{\mathrm{r}}^{\infty}(\rho)+E_{\mathrm{r}}^{\infty}(\tau). \label{eq:ERegSubadditivity}
\end{equation}
Moreover, we can show that $E_{\mathrm{r}}^{\infty}$ fulfills Proposition~\ref{prop:ER}. More precisely, for any two states $\rho$ and $\sigma$ in a Hilbert space of dimension $d_{AB}$ with $\frac{1}{2}||\rho-\sigma||_{1}\leq\varepsilon$
and any $\tau$, we have 
\begin{equation}
\left|E_{\mathrm{r}}^{\infty}\left(\rho\otimes\tau\right)-E_{\mathrm{r}}^{\infty}\left(\sigma\otimes\tau\right)\right|\leq\varepsilon\log_{2}d_{AB}+(1+\varepsilon)h\left(\frac{\varepsilon}{1+\varepsilon}\right). \label{eq:ProofRegularized}
\end{equation}
For this, we can use similar arguments as in the proof of Corollary 8 in~\cite{Winter2016}. As can be seen by inspection, the following equality holds:
\begin{align}
 & E_{\mathrm{r}}\left(\rho^{\otimes n}\otimes\tau^{\otimes n}\right)-E_{\mathrm{r}}\left(\sigma^{\otimes n}\otimes\tau^{\otimes n}\right)\\
 & =\sum_{t=1}^{n}E_{\mathrm{r}}\left(\rho^{\otimes t}\otimes\tau^{\otimes t}\otimes\sigma^{\otimes n-t}\otimes\tau^{\otimes n-t}\right)\nonumber \\
 & \quad\quad\quad -E_{\mathrm{r}}\left(\rho^{\otimes t-1}\otimes\tau^{\otimes t-1}\otimes\sigma^{\otimes n-t+1}\otimes\tau^{\otimes n-t+1}\right)\nonumber \\
 & =\sum_{t=1}^{n}E_{\mathrm{r}}\left(\rho\otimes\tau\otimes\Omega_{t}\right)-E_{\mathrm{r}}\left(\sigma\otimes\tau\otimes\Omega_{t}\right)\nonumber 
\end{align}
with $\Omega_{t}=(\rho\otimes\tau)^{\otimes t-1}\otimes(\sigma\otimes\tau)^{\otimes n-t}$. Using triangle inequality, we obtain:
\begin{align}
 & \left|E_{\mathrm{r}}\left(\rho^{\otimes n}\otimes\tau^{\otimes n}\right)-E_{\mathrm{r}}\left(\sigma^{\otimes n}\otimes\tau^{\otimes n}\right)\right|\\
 & \leq\sum_{t=1}^{n}\left|E_{\mathrm{r}}\left(\rho\otimes\tau\otimes\Omega_{t}\right)-E_{\mathrm{r}}\left(\sigma\otimes\tau\otimes\Omega_{t}\right)\right|.\nonumber 
\end{align}
Using Proposition~\ref{prop:ER}, we further obtain
\begin{align}
 & \left|E_{\mathrm{r}}\left(\rho\otimes\tau\otimes\Omega_{t}\right)-E_{\mathrm{r}}\left(\sigma\otimes\tau\otimes\Omega_{t}\right)\right|\nonumber \\
 & \leq\varepsilon\log_{2}d_{AB}+(1+\varepsilon)h\left(\frac{\varepsilon}{1+\varepsilon}\right)
\end{align}
for any $t$.
Collecting the above results gives us
\begin{align}
\left|E_{\mathrm{r}}\left(\rho^{\otimes n}\otimes\tau^{\otimes n}\right)-E_{\mathrm{r}}\left(\sigma^{\otimes n}\otimes\tau^{\otimes n}\right)\right| & \leq n\varepsilon\log_{2}d_{AB}\\
 & +n(1+\varepsilon)h\left(\frac{\varepsilon}{1+\varepsilon}\right).\nonumber 
\end{align}
From this expression we directly obtain 
\begin{align}
\left|\frac{1}{n}E_{\mathrm{r}}\left(\rho^{\otimes n}\otimes\tau^{\otimes n}\right)-\frac{1}{n}E_{\mathrm{r}}\left(\sigma^{\otimes n}\otimes\tau^{\otimes n}\right)\right| & \leq\varepsilon\log_{2}d_{AB}\\
 & +(1+\varepsilon)h\left(\frac{\varepsilon}{1+\varepsilon}\right),\nonumber 
\end{align}
which implies the claimed inequality~(\ref{eq:ProofRegularized}) by taking the limit $n \rightarrow \infty$.

Now, we can obtain statements analogous to Eq.~(\ref{eq:ProofEr-5}) for the transition $\rho \rightarrow \phi^+$:
\begin{align}\label{distil1}
m+E_{\mathrm{r}}^{\infty}\left(\tilde{\tau}\right) & =E_{\mathrm{r}}^{\infty}\left(\ket{\phi^{+}}\!\bra{\phi^{+}}^{\otimes m}\otimes\tilde{\tau}\right)\\
 & \leq E_{\mathrm{r}}^{\infty}\left(\mu^{S_{1}\ldots S_{m}}\otimes\tilde{\tau}\right)+\frac{\varepsilon}{2}m+\left[1+\frac{\varepsilon}{2}\right]h\left(\frac{\frac{\varepsilon}{2}}{1+\frac{\varepsilon}{2}}\right)\nonumber\nonumber \\
 & \leq E_{\mathrm{r}}^{\infty}\left(\rho^{\otimes n}\otimes\tau\right)+\frac{\varepsilon}{2}m+\left[1+\frac{\varepsilon}{2}\right]h\left(\frac{\frac{\varepsilon}{2}}{1+\frac{\varepsilon}{2}}\right)\nonumber\nonumber \\
 & \leq nE_{\mathrm{r}}^{\infty}\left(\rho\right)+E_{\mathrm{r}}^{\infty}\left(\tau\right)+\frac{\varepsilon}{2}m+\left[1+\frac{\varepsilon}{2}\right]h\left(\frac{\frac{\varepsilon}{2}}{1+\frac{\varepsilon}{2}}\right).\nonumber 
\end{align}
Using the same arguments as below Eq.~(\ref{eq:ProofEr-5}), we conclude that the asymptotic entanglement distillation rate is upper bounded by $E_{\mathrm{r}}^{\infty}(\rho)$. Moreover, for any two entangled states $\rho$ and $\sigma$ a transformation at rate $E_{\mathrm{r}}^{\infty}(\rho)/E_{\mathrm{r}}^{\infty}(\sigma)$ can be achieved by using the protocol described below Eq.~(\ref{eq:Ereg}). 

We further notice that if the regularized relative entropy of entanglement is additive, i.e., if the inequality~(\ref{eq:ERegSubadditivity}) is an equality for all states $\rho$ and $\tau$, then by asymptotic continuity of $E_{\mathrm{r}}^{\infty}$~\cite{christandl2006structure,Winter2016} we can apply Theorem~2. It follows that the optimal transformation rate for any pair of states in this setting is given by $R(\rho\rightarrow\sigma)=E_{\mathrm{r}}^{\infty}(\rho)/E_{\mathrm{r}}^{\infty}(\sigma)$, which interestingly coincides with the rate conjectured in~\cite{Brandao2008}. 

\subsubsection{Logarithmic negativity}
We will now quantify entanglement using logarithmic negativity~\cite{PhysRevA.58.883,PhysRevA.65.032314,PhysRevLett.95.090503}
\begin{equation}
E_{\mathrm{n}}(\rho)=\log_{2}\left\Vert \rho^{T_{A}}\right\Vert _{1},
\end{equation}
and investigate transformations with entanglement battery, replacing condition (\ref*{eq:LOCCEB-2}) by 
\begin{equation}
E_{\mathrm{n}}(\tilde{\tau})\geq E_{\mathrm{n}}(\tau).
\end{equation}

Recalling that logarithmic negativity is additive and monotonic under LOCC~\cite{PhysRevA.65.032314}, we directly see that Theorem~1 also holds in this setting. Moreover, due to the additivity of logarithmic negativity, we immediately see that the zero-error rates are given by 
\begin{equation}
R_{\mathrm{ze}}(\rho\rightarrow\sigma)=\frac{E_{\mathrm{n}}(\rho)}{E_{\mathrm{n}}(\sigma)}.
\end{equation}

We will now consider asymptotic transformation with an error margin which vanishes in the asymptotic limit, as defined in Eqs.~(\ref*{eq:Asymptotic}).
Logarithmic negativity is not asymptotically continuous~\cite{Lami2023}, which prevents a direct application of Theorem~2.
However, we will show that it still leads to a nontrivial theory of entanglement manipulation, with bounded singlet distillation rates.
To see this, consider the asymptotic transformation $\rho \rightarrow \phi^+$.
Due to additivity, note that the final state $\mu^{S_{1}\ldots S_{m}}$ fulfills
\begin{equation}
E_{\mathrm{n}}\left(\mu^{S_{1}\ldots S_{m}}\right)\leq nE_{\mathrm{n}}(\rho).
\end{equation}
Assume now that $\mu^{S_{1}\ldots S_{m}}$ is close to $m$ Bell states, i.e., 
\begin{equation}
\left\Vert \mu^{S_{1}\ldots S_{m}}-\ket{\phi^{+}}\!\bra{\phi^{+}}^{\otimes m}\right\Vert _{1}<\varepsilon.
\end{equation}
In the next step, we use the following continuity bound~\cite{Paulsen_2003}
\begin{equation}
\left\Vert \rho^{T_{A}}\right\Vert _{1}-\left\Vert \sigma^{T_{A}}\right\Vert _{1}\leq d_A \left\Vert \rho-\sigma\right\Vert _{1}.
\end{equation}
This implies that 
\begin{equation}
\left\Vert \mu^{T_{A}}\right\Vert _{1}\geq\left\Vert \ket{\phi^{+}}\!\bra{\phi^{+}}^{T_{A}}\right\Vert _{1}^{m}-2^{m}\varepsilon,
\end{equation}
where $\mu^{T_A}$ denotes the partial transpose of the state $\mu^{S_1 \ldots S_m}$.

Collecting the above results and recalling that $||\ket{\phi^{+}}\!\bra{\phi^{+}}^{T_{A}}||_{1}=2$, we obtain 
\begin{align}
nE_{\mathrm{n}}(\rho) & \geq E_{\mathrm{n}}\left(\mu^{S_{1}\ldots S_{m}}\right)=\log_{2}\left(\left\Vert \mu^{T_{A}}\right\Vert _{1}\right)\\
 & \geq\log_{2}\left(\left\Vert \ket{\phi^{+}}\!\bra{\phi^{+}}^{T_{A}}\right\Vert _{1}^{m}-2^{m}\varepsilon\right)\nonumber \\
 & =m+\log_{2}(1-\varepsilon). \nonumber
\end{align}
This inequality can also be expressed as 
\begin{equation}
\frac{m}{n}\leq E_{\mathrm{n}}(\rho)-\frac{1}{n}\log_{2}(1-\varepsilon).
\end{equation}
Since we can choose arbitrary $\varepsilon > 0$, this result means that the asymptotic transformation rate for the conversion $\rho \rightarrow \phi^+$ is upper bounded by $E_\mathrm n (\rho)$. It is further clear that $E_\mathrm n (\rho)$ is a feasible rate for the transformation $\rho \rightarrow \phi^+$, as can be seen using the same techniques as in the proof of Theorem~2. This gives an operational meaning to logarithmic negativity as the optimal rate of distilling singlets in the presence of a resource battery.

\subsubsection{Geometric entanglement}

We will now show that not all entanglement quantifiers lead to a meaningful theory for entanglement manipulation. This can be demonstrated in particular for the geometric entanglement, defined as~\cite{PhysRevA.68.042307,Streltsov_2010}
\begin{equation}
E_{\mathrm{g}}(\rho)=1-\max_{\sigma\in\mathcal{S}}F(\rho,\sigma)
\end{equation}
with fidelity $F(\rho,\sigma)= \pqty{ \mathrm{Tr}\sqrt{\sqrt{\rho}\sigma\sqrt{\rho}}}^{2}$ and $\mathcal S$ is the set of separable states. We now consider asymptotic transformation rates as defined in Eq.~(\ref*{eq:Asymptotic}), where the condition~(\ref*{eq:Asymptotic-E}) is replaced by 
\begin{equation}
E_{\mathrm{g}}(\tilde \tau^{C})\geq E_{\mathrm{g}}(\tau^{C}),\label{eq:Asymptotic-Ed}
\end{equation}
i.e., the amount of entanglement in the battery is constrained by the geometric entanglement. 

As we will now show, in this setting it is possible to convert $n$ Bell states $\ket{\phi^+}$ into $rn$ copies of $\ket{\phi^+}$ for any $r$ with arbitrary accuracy. For this, consider a pure state of the form 
\begin{equation}
\ket{\psi}=\frac{1}{\sqrt{2}}\ket{00}+\frac{1}{\sqrt{2(d-1)}}\sum_{i=1}^{d-1}\ket{ii},
\end{equation}
where $d$ is the local dimension of Alice and Bob. The geometric entanglement of this state is given by $E_{\mathrm{g}}(\ket{\psi})=1/2$, which is the same amount as in the Bell state $\ket{\phi^{+}}=(\ket{00}+\ket{11})/\sqrt{2}$. At the same time, the entanglement entropy of $\ket{\psi}$ is given by 
\begin{equation}
S(\psi^{A})=1+\frac{1}{2}\log_{2}(d-1),
\end{equation}
which is unbounded for large $d$. This also means that the distillable entanglement of $\ket{\psi}$ is unbounded~\cite{PhysRevA.53.2046}. 

Assume now that Alice and Bob share an initial state of the form $\ket{\phi^{+}}^{\otimes n}\otimes\ket{\psi}^{\otimes n}$, where the battery is in the state $\ket{\psi}^{\otimes n}$.
If Alice and Bob permute the primary system and the battery, they obtain the state $\ket{\psi}^{\otimes n}\otimes\ket{\phi^{+}}^{\otimes n}$, where the final state of the battery is $\ket{\phi^{+}}^{\otimes n}$.
Noting that 
\begin{equation}
E_{\mathrm{g}}(\ket{\phi^{+}}^{\otimes n})=E_{\mathrm{g}}(\ket{\psi}^{\otimes n})=1-\frac{1}{2^{n}}
\end{equation}
we conclude that it is possible to convert $n$ Bell states into $n$ copies of the state $\ket{\psi}$, while preserving the geometric entanglement of the battery.
Since we can perform another LOCC protocol to distill singlets, it follows that in this setting $\ket{\phi^+}^{\otimes n}$ can be converted into $\ket{\phi^+}^{\otimes n S(\psi^A)}$ with arbitrary accuracy for large enough $n$.
Moreover, $S(\psi^A)$ can be made arbitrarily large by appropriately choosing the local dimension $d$.

In summary, we have shown that choosing the geometric entanglement for the setting considered in this article leads to a trivial theory of entanglement.
Note that this is distinct from the phenomenon of self-dilution observed with logarithmic negativity, since in that case, the distillation rates are still bounded.
Here, the divergence of distillation rates is caused by the lack of sensitivity in geometric entanglement to distinguish $\ket{\psi}$ and $\ket{\phi^+}$.

\subsection{Connection to catalysis}

The notion of a resource battery is closely related to catalysis~\cite{Datta_2023,lipka2023catalysis}, which is studied extensively in resource theories, entanglement being a special case thereof. Resource theories are characterized by free operations and states, that describe the abilities that an agent has unlimited access to. In entanglement theory, the standard set of free operations often correspond to LOCC while free states being separable states; in thermodynamics, thermal operations form the basic set of free operations, with Gibbs thermal states being free states.

Given any resource theory, we say that $\rho$ can be transformed to $\sigma$ with (exact) catalysis if we can find a catalyst state $\tau$, and a free operation $\Lambda$ such that
\begin{align}
    \Lambda \pqty{\rho^S \otimes \tau^C} &= \sigma^S \otimes \tau^C.
\end{align}
Comparing this to our conditions on e.g. an entanglement battery Eqs.~(\ref*{eq:LOCCEB-1}) and (\ref*{eq:LOCCEB-2}), we see that catalysis is a stricter condition: in catalysis, the state of the catalyst must be preserved, while Eq.~(\ref*{eq:LOCCEB-2}) only requires the entanglement in the battery is preserved.

We can show that we can recover the catalytic condition if we impose additional restrictions on the battery. Taking entanglement theory as an example,
let $\Bqty{E_i}$ be a complete set of monotones for LOCC, i.e.\ there is an LOCC protocol transforming $\rho$ to $\sigma$ if and only if $E_i (\rho) \geq E_i (\sigma)$ for all $i$.
Let us now impose the following condition on the battery-assisted transformations: the battery entanglement must not decrease for all of these $E_i$'s, namely $E_i(\tilde{\tau}) \geq E_i (\tau)$ for all $i$.
In this case, the completeness of $\Bqty{E_i}$ implies that there exists an LOCC protocol that transform $\tilde{\tau}$ back to $\tau$.
Thus we can always post-process the battery into its initial state, and we recover the catalytic condition.
This implies that by imposing only a single entanglement measure, we are ignoring all of the other types of entanglement that are present in the battery. Complete sets of monotones for entangled state transformations have been investigated in~\cite{PhysRevLett.83.436,PhysRevLett.83.1046,PhysRevA.72.022323,PhysRevLett.130.240204,Takagi_2019}, while in thermodynamics, the complete set of monotones have also been found for smaller sets of state transformations~\cite{horodecki2013fundamental}.

There is another connection to catalysis.
Let us examine what happens to the battery state in a transformation cycle $\rho \to \sigma \to \rho$ when we perform the swapping protocol.
Since $R(\rho \to \sigma) R(\sigma \to \rho) = 1$, for any $\delta > 0$, there is $m, n, s$ s.t. $\rho^{\otimes m} \to \sigma^{\otimes n} \to \rho^{\otimes s}$ and $s/m \geq 1 - \delta$.
With the swapping protocol, initially we prepare the battery in the state $\tau = \sigma^{\otimes n} \otimes \rho^{\otimes s}$.
After the first stage of the transformation, the battery is in the state $\tau' = \rho^{\otimes m} \otimes \rho^{\otimes s}$ and after the second stage, the state is $\tau'' = \sigma^{\otimes n} \otimes \rho^{\otimes m}$.
Notice that although the state of the battery after the first stage of the cycle $\tau'$ can be far from its initial state $\tau$, the final state of the battery $\tau''$ is near its initial state $\tau$.
This shows that the battery behaves almost catalytically in the whole cycle, although it might have large fluctuations in each individual stage.

\subsection{Correlations between main system and battery}

We will now consider a more general setting where the battery might
become correlated with the main system. We say that a state $\rho$
can be transformed into another state $\sigma$ in this setting if
for any $\varepsilon>0$, there exists an LOCC protocol $\Lambda$
and a state of the battery $\tau$ such that \begin{subequations} \label{eq:CorrelatedBattery}
\begin{align}
\Lambda\left(\rho^{AB}\otimes\tau^{A'B'}\right) & =\mu^{ABA'B'},\\
\left\Vert \mu^{AB}-\sigma^{AB}\right\Vert _{1} & <\varepsilon,\\
E(\mu^{A'B'}) & \geq E(\tau^{A'B'}).
\end{align}
\end{subequations}

To obtain nontrivial state transformations, we consider entanglement measures
$E$ having the following properties:
\begin{enumerate}
\item Monotonicity under LOCC: 
\begin{equation}
E(\Lambda[\rho])\leq E(\rho)
\end{equation}
for any LOCC protocol $\Lambda$.
\item Superadditivity: 
\begin{equation}
E^{AA'|BB'}(\rho^{AA'BB'})\geq E(\rho^{AB})+E(\rho^{A'B'}).
\end{equation}
\item Additivity on product states: 
\begin{equation}
E^{AA'|BB'}(\rho^{AB}\otimes\sigma^{A'B'})=E(\rho^{AB})+E(\sigma^{A'B'}).
\end{equation}
\item Continuity.
\end{enumerate}
Note that asymptotic continuity is not required in the following. Examples of entanglement measures fulfilling these properties are the squashed entanglement~\cite{10.1063/1.1643788} and the conditional entanglement of mutual information~\cite{PhysRevLett.101.140501,yang2007conditional}. To see that conditional entanglement of mutual information fulfills superadditivity (property 2), consider an extension $\tau^{AA'A''BB'B''}$ of the state $\rho^{AA'BB'}$. It holds that
\begin{align}
 & I(AA'A'':BB'B'')-I(A'':B'')\\
 & =I(AA'A'':BB'B'')-I(A'A'':B'B'')\nonumber \\
 & \quad +I(A'A'':B'B'')-I(A'':B''),\nonumber 
\end{align}
which directly implies that $E(\rho^{AA'BB'})\geq E(\rho^{AB})+E(\rho^{A'B'})$.

Continuity together with Eqs.~(\ref{eq:CorrelatedBattery}) implies that for
any $\delta>0$, there is an LOCC protocol $\Lambda$ and a state of
the battery $\tau$ such that \begin{subequations}
\begin{align}
\Lambda\left(\rho^{AB}\otimes\tau^{A'B'}\right) & =\mu^{ABA'B'},\\
\left|E\left(\sigma^{AB}\right)-E\left(\mu^{AB}\right)\right| & <\delta,\\
E(\mu^{A'B'}) & \geq E(\tau^{A'B'}).
\end{align}
\end{subequations}

We will now show that state transformations in this setting are
fully characterized by the amount of entanglement $E$, i.e., a state
$\rho^{AB}$ can be converted into $\sigma^{AB}$ if and only if
\begin{equation}
E(\rho^{AB})\geq E(\sigma^{AB}).\label{eq:CorrelatedBatteryE}
\end{equation}
To prove this, we will first show that the amount of entanglement
in the main system $AB$ cannot increase in this procedure. If $\rho^{AB}$
can be converted into $\sigma^{AB}$, then
\begin{align}
E\left(\sigma^{AB}\right) & \leq E\left(\mu^{AB}\right)+\delta\leq E^{AA'|BB'}\left(\mu^{ABA'B'}\right)-E\left(\mu^{A'B'}\right)+\delta\nonumber \\
 & \leq E^{AA'|BB'}\left(\rho^{AB}\otimes\tau^{A'B'}\right)-E\left(\mu^{A'B'}\right)+\delta \nonumber \\
 & =E\left(\rho^{AB}\right)+E\left(\tau^{A'B'}\right)-E\left(\mu^{A'B'}\right)+\delta \nonumber \\
 & \leq E\left(\rho^{AB}\right)+\delta. 
\end{align}
Since $\delta>0$ can be chosen arbitrarily, we obtain Eq.~(\ref{eq:CorrelatedBatteryE})
as claimed. The converse can be shown using the same protocol as in
the proof of Theorem~2, i.e., initializing the battery in the desired
target state, and then permuting the state of the main system and
the battery.

We will now show that for bipartite pure states $\ket{\psi}^{AB}$
and $\ket{\phi}^{AB}$, the transitions are fully characterized by
the entanglement entropy, i.e., the transformation $\ket{\psi}^{AB}\rightarrow\ket{\phi}^{AB}$
is possible if and only if 
\begin{equation}
S(\psi^{A})\geq S(\phi^{A}).\label{eq:PureStateCatalysis}
\end{equation}
Interestingly, this result does not depend on the entanglement measure used, as long
as the measure fulfills the properties 1-4 mentioned above.
Note that this generalizes the result of Ref.~\cite{alhambra2019entanglement} by allowing general mixed states as a battery.

For proving this statement, recall that Eq.~(\ref{eq:PureStateCatalysis}) completely characterizes
transformations between pure states in a catalytic setting~\cite{PhysRevLett.127.150503}, which is a more restrictive setting than allowing for a correlated battery.
In more detail, we say that $\rho$ can be converted into $\sigma$ via LOCC with correlated catalyst if for any $\varepsilon>0$ there
exists an LOCC protocol $\Lambda$ and a state $\tau$
such that~\cite{PhysRevLett.127.150503,ganardi2023catalytic,lami2023catalysis}
\begin{subequations}
\begin{align}
\Lambda\left(\rho^{AB}\otimes\tau^{A'B'}\right) & =\mu^{ABA'B'},\\
\left\Vert \mu^{AB}-\sigma^{AB}\right\Vert _{1} & <\varepsilon,\\
\mu^{A'B'} & =\tau^{A'B'}.
\end{align}
\end{subequations} 
As follows from the results in~\cite{PhysRevLett.127.150503}, a pure state $\ket{\psi}^{AB}$
can be converted into another pure state $\ket{\phi}^{AB}$ in this
setting if and only if Eq.~(\ref{eq:PureStateCatalysis}) is fulfilled.
Note that any entanglement measure $E$ which fulfills conditions 1-4 mentioned above does not increase in this setting~\cite{PhysRevLett.127.150503}. In more detail, if $\rho^{AB}$ can
be converted into $\sigma^{AB}$ via LOCC with correlated catalysis,
it holds that $E(\sigma^{AB})\leq E(\rho^{AB})$. 

Summarizing these
arguments, any entanglement measure $E$ which fulfills the properties 1-4 mentioned above must have the same monotonicity on pure states as the entanglement
entropy:
\begin{equation}
E\left(\ket{\psi}^{AB}\right)\geq E\left(\ket{\phi}^{AB}\right)\iff S\left(\psi^{A}\right)\geq S\left(\phi^{A}\right).
\end{equation}
 This shows that Eq.~(\ref{eq:PureStateCatalysis}) is necessary and sufficient for a transition $\ket{\psi}^{AB}\rightarrow\ket{\phi}^{AB}$ via LOCC with entanglement battery, as defined in Eqs.~(\ref{eq:CorrelatedBattery}).

We will now consider asymptotic state transformations in this setting.
We say that $\rho$ can be converted into $\sigma$ at rate $r$ if for any $\varepsilon,\delta > 0$ there exist integers $m$, $n$, an LOCC protocol $\Lambda$, and a state of the battery $\tau$ such that 
\begin{subequations}
\begin{align}
\Lambda\left(\rho^{\otimes n}\otimes\tau^{C}\right) & =\mu^{S_{1}\ldots S_{m}C},\\
\left\Vert \mu^{S_{1}\ldots S_{m}}-\sigma^{\otimes m}\right\Vert _{1} & <\varepsilon,\\
E(\mu^{C}) & \geq E(\tau^{C}),\\
\frac{m}{n}+\delta & >r.
\end{align}
\end{subequations}
In the following, $R(\rho \rightarrow \sigma)$ denotes the supremum over all feasible rates $r$ in this setting.

We will now show that for any entanglement measure satisfying the properties 1-4 mentioned above, the asymptotic rate is given by 
\begin{equation}
R(\rho\rightarrow\sigma)=\frac{E(\rho)}{E(\sigma)}. \label{eq:SuperAdditiveMeasureAsymptoticRate}
\end{equation}
We will first show that the transformation rate is upper bounded as 
\begin{equation}
R(\rho\rightarrow\sigma) \leq \frac{E(\rho)}{E(\sigma)}. \label{eq:UpperBound}
\end{equation}
For this, we consider the following relaxation of the asymptotic transformation task (a similar technique has been used in~\cite{ganardi2023catalytic}). Instead of establishing $m$ copies of $\sigma$ from $n$ copies of $\rho$ with a battery $C$, our goal is to establish a state $\mu^{S_1\ldots S_mC}$, with each of the subsystems $S_i$ having almost the same amount of entanglement as $\sigma$. In more detail, we require that for any $\varepsilon,\delta > 0$ there are integers $m$ and $n$, an LOCC protocol $\Lambda$ and a state of the battery $\tau^C$ such that for all $i \leq m$,
\begin{subequations} \label{eq:Asymptotic-2}
\begin{align}
\Lambda\left(\rho^{\otimes n}\otimes\tau^{C}\right) & =\mu^{S_{1}\ldots S_{m}C},\\
\left|E(\mu^{S_{i}})-E(\sigma)\right| & <\varepsilon,\\
E(\mu^{C}) & \geq E(\tau^{C}),\\
\frac{m}{n} & >r-\delta.
\end{align}
\end{subequations}
Here, $r$ is a feasible transformation rate in this process, and the supremum over all feasible rates will be denoted by $R'(\rho \rightarrow \sigma)$. Recalling that $E$ is continuous (property 4), we have 
\begin{equation}
    R'(\rho \rightarrow \sigma) \geq R(\rho \rightarrow \sigma). \label{eq:R'}
\end{equation}

Using Eqs.~(\ref{eq:Asymptotic-2}) and the properties 1-4 of $E$ we find
\begin{align}
nE\left(\rho\right)+E\left(\tau\right) & =E\left(\rho^{\otimes n}\otimes\tau^{C}\right)\geq E\left(\mu^{S_{1}\ldots S_{m}C}\right)\\
 & \geq\sum_{i=1}^{m}E\left(\mu^{S_{i}}\right)+E\left(\mu^{C}\right),\nonumber 
\end{align}
which further implies
\begin{equation}
nE\left(\rho\right)\geq\sum_{i=1}^{m}E\left(\mu^{S_{i}}\right).
\end{equation}
Using once again Eqs.~(\ref{eq:Asymptotic-2}) we arrive at 
\begin{equation}
nE(\rho)>m\left[E(\sigma)-\varepsilon\right],
\end{equation}
leading to 
\begin{equation}
\frac{m}{n}<\frac{E(\rho)}{E(\sigma)-\varepsilon}.
\end{equation}
Applying Eqs.~(\ref{eq:Asymptotic-2}) one more time we obtain
\begin{equation}
r<\frac{E(\rho)}{E(\sigma)-\varepsilon}+\delta.
\end{equation}
Since $\varepsilon > 0$ and $\delta > 0$ can be chosen arbitrarily, we obtain 
\begin{equation}
R'(\rho\rightarrow\sigma)\leq\frac{E(\rho)}{E(\sigma)}.
\end{equation}
Together with Eq.~(\ref{eq:R'}) this completes the proof of Eq.~(\ref{eq:UpperBound}).

To complete the proof of Eq.~(\ref{eq:SuperAdditiveMeasureAsymptoticRate}), we need to show the converse, i.e., a protocol achieving conversion at rate $E(\rho)/E(\sigma)$. This can be done in the same way as in the proof of Theorem~2. In summary, we have proven that the asymptotic rates are given by Eq.~(\ref{eq:SuperAdditiveMeasureAsymptoticRate}) in this setting. This applies to any entanglement quantifier $E$ which fulfills properties 1-4 mentioned above.

\subsection{Self-dilution in thermodynamics}

As mentioned, self-dilution also occurs in thermodynamics if we quantify the battery using $F_{\textrm{max}}$
\begin{align}
    F_{\textrm{max}} (\rho) = \inf \Bqty{\log \lambda \,|\, \rho \leq \lambda \gamma}.
\end{align}
As with entanglement, we look at the protocol where we perform a distillation with a battery and then dilute the result to get the initial state back.
Let us focus on transitions between incoherent states on a single qubit.
In this setting, we can perform a distillation/dilution into $\ketbra{1}$ reversibly, with a rate determined by free energy $F$~\cite{Brandao_2013}.
Furthermore, since we only look at transformations between incoherent states, no additional source of coherence is needed in the dilution process.

Now, let us look at the distillation rate $R(\rho \to \ketbra{1})$ when we use a free energy battery quantified by $F_{\textrm{max}}$.
Note that $F_{\textrm{max}}$ is additive, so the final state $\mu$ satisfies $n F_{\textrm{max}} (\rho) \geq F_{\textrm{max}} (\mu)$.
Furthermore, $F_{\textrm{max}}$ satisfies the asymptotic equipartition property~\cite{tomamichel2015quantum}, so 
\begin{align}
    \lim_{\epsilon \to 0}
    \lim_{m \to \infty}
    \inf_{\norm{\mu - \ketbra{1}^{\otimes m}} \leq \epsilon}
    \frac{1}{m} F_{\textrm{max}} (\mu)
    &=
    F\pqty{\ketbra{1}}.
\end{align}
Now, for any $\epsilon, \delta > 0$, we have $m, n, \mu$ from Eqs.~(\ref*{eq:Asymptotic}) which satisfies
\begin{align}
    \frac{m}{n}
    &\leq \frac{F_{\textrm{max}}(\rho)}{F_{\textrm{max}}(\mu) /m}.
\end{align}
Using the asymptotic equipartition property and the definition of $r$, we get
\begin{align}
    r
    &\leq
    \lim_{\epsilon \to 0}
    \lim_{m \to \infty}
    \sup_{\norm{\mu - \ketbra{1}^{\otimes m}} \leq \epsilon}
    \frac{F_{\textrm{max}}(\rho)}{F_{\textrm{max}}(\mu) /m}
    \\
    &= \frac{F_{\textrm{max}}(\rho)}{F(\ketbra{1})} + \delta. \nonumber
\end{align}
Since this holds for any $\delta > 0$, we must have $R(\rho \to \ketbra{1}) \leq F_{\textrm{max}}(\rho)/F\pqty{\ketbra{1}}$, and in particular the distillation rate is bounded.

Now, consider the following process: Starting with $\rho^{\otimes n}$, we distill $\ketbra{1}^{\otimes n r}$ with the help of a battery, where $r = F_{\textrm{max}}(\rho)/F_{\textrm{max}}(\ketbra{1})$.
With a battery, this transformation can be achieved with zero error.
Then, we perform a dilution procedure $\ketbra{1}^{\otimes n r} \to \rho^{\otimes n r r'}$ \emph{without} battery, with $r' = F(\ketbra{1})/F(\rho)$~\cite{Brandao_2013}.
Then, we have
\begin{align}
    r r'
    &=
    \frac{F_{\textrm{max}}(\rho)}{F(\rho)} 
    \frac{F(\ketbra{1})}{F_{\textrm{max}}(\ketbra{1})},
\end{align}
and we can perform the transformation $\rho^{\otimes n} \to \rho^{\otimes n r r'}$ with a battery. 
We finish by noting that in general we have $F_{\textrm{max}} (\rho) \geq F(\rho)$, and for incoherent states on a single qubit $\rho = (1-p) \ketbra{0} + p \ketbra{1}$, we have equality only when $p = 0, 1$ or when $\rho = \gamma$.

\bibliography{literature}

\end{document}